\pdfoutput=1
\RequirePackage{boolean}
\RequirePackage{reversion}

\IfFileExists
    {\jobname.cfg}
    {\input{\jobname.cfg}}
    {\input{article.cfg}}

\documentclass[\Anonymous{anonymous,review,}{}\ACM{}{nonacm,}acmsmall,10pt,screen]{acmart}
\authorsaddresses{}

\ACM{}{
  \usepackage{hyperref}
  \hypersetup{colorlinks=true, linkcolor=Red,%
    citecolor=ForestGreen, urlcolor=RoyalBlue}
}

\usepackage[utf8]{inputenc}
\usepackage[T1]{fontenc}
\usepackage[english]{babel}
\usepackage{multicol}

\usepackage{XXX}
\newXXX{\XGS}{XGS}{blue}
\newXXX{\XPE}{XPE}{magenta}
\newXXX{\XLR}{XLR}{orange}

\newXXX{\XAG}{XAG}{brown}

\usepackage{mathpartir}
\usepackage{amsmath}
\usepackage{amssymb}

\usepackage{normalization_by_realizability}
\usepackage{stlc}

\usepackage{xspace}
\newcommand{\ie}{\textit{i.e.}\xspace}
\newcommand{\etc}{\textit{etc.}\xspace}

\makeindex

\setcopyright{none}             

\bibliographystyle{ACM-Reference-Format}
\citestyle{acmauthoryear}   

\let\para\paragraph
\renewcommand{\paragraph}[1]{\para{\textbf{#1}}}

\begin{document}
\title{Functional Pearl: a Classical Realizability Program}

\author{Pierre-{\'E}variste Dagand}
\affiliation{
  \institution{Sorbonne Université, CNRS, Inria Paris, LIP6}
  \country{France}
}
\email{pierre-evariste.dagand@lip6.fr}

\author{Lionel Rieg}
\affiliation{
  \institution{VERIMAG UMR 5104, Université Grenoble Alpes, France}
  \country{France}
}
\email{lionel.rieg@univ-grenoble-alpes.fr}

\author{Gabriel Scherer}
\affiliation{
  \institution{Inria Saclay}
  \country{France}
}
\email{gabriel.scherer@inria.fr}

\begin{abstract}
  For those of us who generally live in the world of syntax, semantic
  proof techniques such as reducibility, realizability or logical
  relations seem somewhat magical despite -- or perhaps due to --
  their seemingly unreasonable effectiveness. Why do they work?  At
  which point in the proof is ``the real work'' done?
  Hoping to build a programming intuition of these proofs, we
  \emph{implement} a normalization argument for the simply-typed
  $\lambda$-calculus with sums: instead of a proof, it is described as
  a program in a dependently-typed meta-language.

  The semantic technique we set out to study is Krivine's
  \emph{classical realizability}, which amounts to
  a proof-relevant presentation of \emph{reducibility} arguments --
  unary logical relations. Reducibility assigns a predicate to each
  type, realizability assigns a set of \emph{realizers}, which are
  abstract machines that extend $\lambda$-terms with a first-class
  notion of contexts. Normalization is a direct consequence of an
  \emph{adequacy theorem} or \emph{fundamental lemma}, which states that
  any well-typed term translates to a realizer of its type.
  We show that the adequacy theorem, when written as a dependent
  program, corresponds to an evaluation procedure. In particular,
  a weak normalization proof precisely computes a series of reductions
  from the input term to a normal form. Interestingly, the choices
  that we make when we define the reducibility predicates -- truth and
  falsity witnesses for each connective -- determine the evaluation
  order of the proof, with each datatype constructor behaving in
  a lazy or strict fashion.

  While most of the ideas in this presentation are folklore among
  specialists, our dependently-typed functional program  provides an
  accessible presentation to a wider audience. In particular, our work
  provides a (hopefully) gentle introduction to abstract machine calculi which
  have recently been used as an effective research
  vehicle~\citep*{downen-icfp15,curien-fiore-munch-popl16}.
\end{abstract}

\keywords{classical realizability, dependent types, weak normalization, extraction}

\maketitle

\XXX{Comments are included in \texttt{paper.pdf} and omitted in \texttt{icfp.pdf},
  which also omits the content that is marked for the Long version only.}

\begin{version}{\Long}
\end{version}

\Long{
}{
}

\section{Introduction}

Realizability, logical relations and parametricity are tools to study
the meta-theory of syntactic notions of computation; typically, typed
$\lambda$-calculi. Starting from a syntactic type system, they assign to
each type/formula a predicate that captures a semantic property, such
as normalization, consistency, or some canonicity properties. Proofs
using such techniques rely crucially on an \emph{adequacy lemma}, or
\emph{fundamental lemma}, that asserts that any term accepted by the
syntactic system of inference also verifies the semantic property
corresponding to its type. In a realizability setting, one would prove
that the term is accepted by the predicate. Similarly, the fundamental
lemma of logical relations states that the term is logically related
to itself.  There are many variants of these techniques, which have
scaled to powerful logics~\citep*{abel-hdr} (e.g., the
Calculus of Constructions, or predicative type theories with
a countable universe hierarchy) and advanced type-system
features~\citep*{turon-thamsborg-ahmed-birkedal-dreyer,timany-2018}
(second-order polymorphism, general references, equi-recursive
types, local state\dots).

This paper sets out to explain the \emph{computational
  behavior} of such a semantic approach.
Where should we start looking if we are interested in all three of
realizability, logical relations and parametricity? Our reasoning was
the following.
First, we ruled out parametricity: it is arguably a specific form of
logical
relation~\citep*{hermida:logical-relation-parametricity-14}. Besides,
it is motivated by applications -- such as characterizing
polymorphism, or extracting invariants from specific types -- whose
computational interpretation is less clear than a proof of
normalization.
Second, general logical relations seem harder to work with than
realizability. They are binary (or $n$-ary) while realizability is
unary, and the previous work of \citet*{bernardy-lasson} suggests that
logical relations and parametricity can be built from realizability by
an iterative process. While the binary aspect of logical relations is
essential to formulate representation invariance
theorems~\citep*{atkey:parametric-dependent-types-14}, it does not come
into play for normalization proof of simpler languages such as the
simply-typed $\lambda$-calculus.
We are therefore left with realizability while keeping an eye towards
normalization proofs -- weak normalization rather than strong, for the
benefit of simplicity. This is a well-worn path, starting with the
reducibility method~\citep*{tait1967}, which we shall now recall and
put in perspective with recent developments.
While the present work focuses on the pedagogical value of revisiting
the Classics~\citep{krivine:lambda-models}, the scope of realizability
techniques extends beyond meta-theoretical considerations, with
applications to compiler correctness~\citep{benton:biorth-compiler}
or language design~\citep{downen:sequent-core} to name but a few
examples.

The notion of \emph{polarity} plays a central role in the following
retrospective. We split our types into \emph{positives} and
\emph{negatives}. This distinction has a clear logical status: it
refers to which of the left or right introduction rules, in a sequent
presentation, are
non-invertible~\citep*{andreoli:focusing-92,zeilberger:harmony-focusing-2013}.
This distinction is also reminiscent of
call-by-push-value~\citep*{levy:cbpv} and can be summarized by the
adage: ``positive types are defined by their constructors whereas
negatives types are defined by their destructors''.  The sum type,
integers and inductive datatypes in general are positives whereas
functions, records and coinductive types~\citep*{abel:copatterns} in
general are negatives. ML programmers tend to favor the positive
fragment to architect their software whereas object-oriented programmers live almost exclusively
in the negative fragment~\citep*{cook:data-abstraction}.
In the absence of side-effects, we may sometimes playfully encode the
former using the latter, through Church encodings for example, or we
may gainfully defunctionalize the former to obtain the
latter~\citep{danvy:program-as-data-objects}. However, as noted
before~\citep{zeilberger-phd} and as we see in the following,
studying typed calculi through suitably polarized lenses yields
a refined understanding of past research results.

\paragraph{On the negative fragment: reducibility}

Normalization proofs for typed $\lambda$-calculi cannot proceed by
direct induction on the term structure: to prove that a function
application $\app \ea \eb$ is normalizing, a direct induction would
only provide as hypotheses that the function $\ea$ and its argument
$\eb$ are normalizing. But this tells us nothing of whether
$\app \ea \eb$ itself normalizes; for example, the self-application
$\ob\omega \eqdef \lam \eva {\app \eva \eva}$ is normalizing, but applying
it to itself gives a non-normalizing term $\app {\ob\omega} {\ob\omega}$
that reduces to itself.

A powerful fix to this issue, introduced by \citet*{tait1967}, is to
define a reducibility predicate over terms by induction over
types:
(1)
Terms $\ea$ of base type $\ta$ are reducible at this type if they are
strongly normalizing;
(2)
Terms $\ea$ of function type
$\fun \ta \tb$ are reducible at this type if for all $\eb$ reducible
at $\ta$, the application $\app \ea \eb$ is reducible at $\tb$.
%
Reducibility is thus defined to imply normalization, but also provide
stronger hypotheses to an inductive argument; in particular, reducibility of
function types is exactly what we needed in the $\app \ea \eb$
case. One can then prove, by induction on typing derivations, that any
well-typed term $\ea$ of type $\ta$ is reducible (and thus
normalizing) at $\ta$.

There is more to this technique than a mere proof trick of
strengthening an inductive hypothesis. In particular, we have
transformed a \emph{global} property of being normalizing into
a \emph{modular} property, reducibility, which describes the interface
expected of terms at a given type. A global, whole-term property does
not say much of what happens when a term is put in a wider context;
the modular property specifies the possible interactions between the
term and its context.

This view directly relates to more advanced fields of logic and type
theory. It offers a semantic interpretation of types, where ``behaving
at the type $\ta$'' is a property of how a (possibly untyped) term
interacts with outside contexts; syntactic typing
  rules can be reconstructed as admissible proof
principles~\citep*{harper-semantic-types} that establish this
behavioral property by syntactic inspection of the term. In logic,
realizability is a general notion of building models where provable
formulas are \emph{realized} by computable functions, and it is through
this lens that the constructive character of intuitionistic logics was
historically explained. In programming language theory, it is common
to define not only a unary predicate in such a type-directed fashion,
but also binary predicates, called logical relations, to capture the
notion of program equivalence.

\paragraph{Negative interlude: a simpler proof of weak normalization}

Note that while reducibility is traditionally used to prove strong
normalization -- the fact that all possible reduction sequences (for all
reduction strategies) are terminating -- it can also be used to prove
weak normalization -- the fact that there exists one terminating
reduction sequence. For the negative fragment of the simply-typed
$\lambda$-calculus, there is a simpler proof argument for weak
normalization, attributed to Alan Turing~\citep*{barendregt-manzonetto-2013}.
This argument does not proceed by induction on the typing
derivation, but considers the sizes of the types appearing in redexes,
and reduces one of the redexes with the largest types.
The intuition is that when a redex $\app {(\lam \eva \ea)} \eb$
at type $\fun \ta \tb$ is reduced to $\subst \ea \eva \eb$,
it may create new redexes, but only at the smaller types $\ta$ or $\tb$.

However, introducing sum types $\sum \ta \tb$ somewhat clouds this
picture. The sum elimination construction is typed as follows
\begin{mathpar}
  \infer
  {\ca \der \ty \ea {\sumfam \ta}\\
   \ca, \ty {\evaidx 1} {\taidx 1} \der \ty {\ebidx 1} \tc\\
   \ca, \ty {\evaidx 2} {\taidx 2} \der \ty {\ebidx 2} \tc}
  {\ca \der \ty {\desumfam \ea \eva \eb} \tc}
\end{mathpar}
where $\tc$ is unrelated in size to either $\taidx 1$ or $\taidx 2$.  For
example, consider the following term:
$$
\desum {\left(\desum {\inj{1}{\enfa}}
                     {\evaidx 1} {\inj{2}{(\inj{2}{\evaidx 1})}}
                     {\evaidx 2} {\inj{2}{(\inj{1}{\evaidx 2})}}\right)}
       {\evaidx 1} {0}
       {\evaidx 2} {1}
$$
where $\enfa$ is a value of type $\taidx 1$. The (only) available redex
is the innermost case, at type $\sumfam \ta$. However, performing the
reduction yields the term
$$
\desum {\inj{2}{(\inj{2}{\enfa})}}
       {\evaidx 1} {0}
       {\evaidx 2} {1}
$$
that introduces a redex of (larger!) type
$\sum{\ta_3}{(\sum{\taidx 2}{\taidx 1})}$. More generally, when the term
$\ea$ in the inference rule above is a sum injection $\inj i \eap$;
reducing this redex at type $\sumfam \ta$ into $\subst {\ebidx i}
{\evaidx i} \eap$ may create redexes at type $\taidx i$, but also at type
$\tc$, which is unrelated in size. This typically occurs when
$\ebidx i$ starts with a constructor of type $\tc$ and the whole
redex is under an elimination form for $\tc$, as exemplified by the
above nested cases.

One can work around this issue by introducing commuting
conversions. For instance, the nested cases above can
be extruded as follows
$$
\desum {\inj{1}{\enfa}}
       {\evaidx 1} {\desum {\inj{2}{(\inj{2}{\evaidx 1})}}
                        {\evaidx 1} {0}
                        {\evaidx 2} {1}\medskip}
       {\evaidx 2} {\desum {\inj{2}{(\inj{1}{\evaidx 2})}}
                        {\evaidx 1} {0}
                        {\evaidx 2} {1}}
$$
whose effect is to enable the redexes of larger type -- see for
example \citet*[chapter 3.3]{scherer:PhD} for a systematic treatment of commuting conversions in a weakly normalizing setting. However, this approach adds bureaucracy
and complexity.

Trying to naively extend the reducibility method to sum types hits a
similar well-foundedness issue. A natural definition of reducibility
at a sum type $\sumfam \ta$ would be that $\ea$ is reducible at
$\sumfam \ta$ if, for any type $\tc$ and $\ebidx 1, \ebidx 2$
reducible at $\tc$, the elimination $\pddesumfam \ea \eva \eb$ is
reducible at $\tc$. However, this definition is not well-founded. In
the function case, we defined reducibility at $\fun \ta \tb$ from
reducibility at the smaller types $\ta$ and $\tb$, but to define
reducibility at type $\sumfam \ta$ we use the definition of reducibility
at all types $\tc$, in particular at $\sumfam \ta$ itself or even
larger types, such as $\sum{\taidx 3}{(\sum{\taidx 2}{\taidx 1})}$
in our example above.

\paragraph{On the positive fragment: bi-orthogonal closure}

To obtain a reducibility proof accommodating sum types and side-effects,
\citet*{lindley-stark-2005} solved this well-foundedness problem. They use a
bi-orthogonality closure, also called the top-top ($\top\top$)
closure~\citep*{pitts-stark-1998} following the notation
of \citet*{pitts:biorthogonality-00}, although ($\bot\bot$) would be a
much more appropriate name and notation.  Looking through the
Curry-Howard isomorphism, we find that (linear) logicians had in fact
pioneered this technique: \citet{girard:linear-logic} introduced
bi-orthogonality to prove normalization of linear logic proof-nets.
We follow in these footsteps (Section~\ref{sec:types}).

This technique proceeds as follows. We say that a program
\emph{context} $\plug \Ca \ehole$ is \emph{reducible at type $\ta$} if,
for any \emph{value} $\va$ of type $\ta$, the composed term
$\plug \Ca \va$ is normalizing. Then we define the terms reducible at
a sum type $\sum \ta \tb$ to be the $\ea$ such that, for any context
$\plug \Ca \ehole$ reducible at type $\sum \ta \tb$, the composed term
$\plug \Ca \ea$ is normalizing.

The guiding intuition of our first attempt at a definition, and of the
definition of reducibility at function types, was to express how
using the term $\ty \ea \ta$ preserves normalization. This
intuition still applies here; in particular, the set of contexts
$\plug \Ca \ehole$ includes the contexts of the form
$\ddesumfam \ehole \eva \eb$, that perform a case analysis on their
argument. However, notice that our definition does not require that
$\plug \Ca \ea$ be reducible at its type, which would again result in
an ill-founded definition, but only that it be normalizing: we do not
require the modular property, only the global one.

Note that adapting the naive definition by saying that $\ea$ is
reducible at $\sumfam \ta$ if $$\ddesumlam \ea {\evaidx \idx} {\ebidx \idx}$$ is normalizing
would give a well-founded definition, but one that is too weak, as the
global property is not modular. For example, we would not be able to
prove that a term of the form $\app {(\matchwith \ea \dots)} \eap$ is
normalizing by induction on its typing derivation. This is why we need
to quantify on all contexts $\plug \Ca \ehole$, such as
$\app {(\matchwith \ehole \dots)} \eap$ in our example, rather than
only on case eliminations.

There are really three distinct classes of objects in such
a reducibility proof. The \emph{term} and \emph{context fragments}
interact with each other, respecting a modular interface given by the
reducibility predicate at their type. \emph{Whole programs} are only
required to be normalizing. These whole programs are formed by the
interaction of a term fragment $\ea$ and a context $\plug \Ca \ehole$, both
reducible at a type $\ta$.

Note that it is possible to adapt the standard reducibility arguments
in presence of sums ($\ea$ is reducible at $\sumfam \ta$ if it reduces
to $\inj{i} {\eaidx i}$ for $\eaidx i$ reducible at $\taidx i$) but
this requires strengthening various (global) definitions to enforce
stability under anti-reduction, which is a local property stating that
if a predicate holds for $\plug \Ca {\subst {\ebidx i}{\evaidx
i}{\eaidx i}}$ then it must hold for $ {\plug \Ca {\ddesumfam
{{\inj{i}{\eaidx i}}} \eva \eb}}$ as well. This approach is thus less
modular.

\medskip

With this paper, we wish to turn these nuggets of wisdom into a
rationalized process. We achieve this by expressing the problem in a setting
familiar to functional programmers -- dependent types -- and by
expressing the problem in a conceptual framework  (Section~\ref{sec:background}) so
powerful as to make its solution simple and illuminating. This paper
proposes the following 4-step program to enlightenment:
\begin{itemize}
\item
  We recall the standard framework of classical realizability for the
  simply-typed $\lambda$-calculus with sums while identifying the
  precise role of (co)-terms and (co)-values
  (Section~\ref{sec:background}).
  This distinction is absent from usual presentations of classical
  realizability~\citep*{miquel:classical-rea,rieg:PhD} because of
  their bias toward negative types, whereas it plays a crucial role in
  our study of normalization proofs;
\item We show that the computational content of the adequacy lemma is an evaluation
  function (Section~\ref{sec:nbr-variant1}). Along the way, we
  re-discover the fact that the polarity of object types determines
  the flow of values in the interpreter;
\item We show that this evaluation order can be exposed by \emph{recovering} the
  compilation to abstract machines from the typing constraints that
  appear when we move from a simply-typed version of the adequacy
  lemma to a dependently-typed version capturing the full content of
  the theorem (Section~\ref{subsec:dependent});
%
\item There are several possible ways to construct the truth and
  falsity witnesses used in the proof. We show that these choices
  mechanically sets the evaluation order of the resulting
  normalization procedure (Section~\ref{sec:nbr-all}). At the
  computational heart of a normalization proof lies an evaluator whose
  evaluation strategy is dictated by the flow of values in the model.
\end{itemize}

This research program is also a constructive one: throughout this
paper, we demonstrate that, with some care, our argument%
\footnote{Available as supplementary material.}
can be carried in the Coq proof assistant~\citep{coq} and that the
simply-typed normalization function from
Section~\ref{sec:nbr-variant1} can be extracted from the
dependently-typed program of Section~\ref{subsec:dependent}. We hope
for this paper to excite logicians -- by making fascinating
techniques palatable to a larger audience of functional programmers --
while feeding functional programmers' thirst for Curry-Howard
phenomena.



\section{Background}
\label{sec:background}

\begin{figure}[htb]
\begin{mathpar}
  \begin{array}{rcl@{\quad}r}
    \ea,\eb & := &         & \text{terms} \\
        & \mid & \eva, \evb, \evc   & \text{variables} \\
        & \mid & \lam \eva \ea      & \text{abstractions} \\
        & \mid & \app \ea \eb       & \text{applications} \\
        & \mid & \inj 1 \ea \mid \inj 2 \ea & \text{injections} \\
        & \mid & \ddesumfam \ea \eva \eb & \text{case analysis} \\
        & \mid & \ldots & \\
  \end{array}

  \begin{array}{rcl@{\quad}r}
    \ta,\tb \in \type & := & \tpa  & \text{positive types} \\
            & \mid & \tna & \text{negative types}
    \end{array}\\

  \begin{array}{rcl@{\quad}r@{\qquad}rcl@{\quad}r}
    \tpa & := & \tnat  & \text{integers}
    &
    \tna & := & \tunit & \text{unit type} \\
         & \mid & \sum \ta \tb & \text{sum type}
    &
         & \mid & \fun \ta \tb & \text{function type} \\
  \end{array}

\begin{array}{c@{\qquad}c}
\multicolumn{2}{c}{
  \obinfer{(\eva \of \ta) \in \ca}
  {\judgty \ca \eva \ta}}
  \qquad\qquad
\bigskip\\
  \obinfer
  {\judgty {\ca, \eva \of \ta} \ea \tb}
  {\judgty \ca {\lam \eva \ea} {\fun \ta \tb}}
&
  \obinfer
  {\judgty \ca \ea {\fun \ta \tb}\\
   \judgty \ca \eb \ta}
  {\judgty \ca {\app \ea \eb} \tb}
\bigskip\\
  \obinfer
  {\judgty \ca \ea {\taidx i}}
  {\judgty \ca {\inj i \ea} {\sumfam \ta}}
&
  \obinfer
  {\judgty \ca \ea {\sumfam \ta}\\
   \judgty {\ca, \evaidx i \of \taidx i} {\ebidx i} \tc}
  {\judgty \ca {\ddesumfam \ea \eva \eb} \tc}
\end{array}

\end{mathpar}
\caption{Source language: simply-typed $\lambda$-calculus}
\label{fig:lambda}
\end{figure}

This section introduces various notions this work relies on:
realizability, with a focus on classical realizability, and the
$\mumutilda$ abstract machine.
%
%
\figurename~\ref{fig:lambda} defines our object of study: the simply-typed
$\lambda$-calculus, with functions and sums, integers and a unit type. We dispense with the (folklore) reduction rules as well as term formers and typing rules for unit and integer types. Additionally, we define the Booleans as $\tbool \triangleq \ob{\sum \tunit \tunit}$, inhabited by $\etrue$ (first injection) and $\efalse$ (second injection). We focus on sums and functions throughout the paper, the remaining cases being expounded in our Coq development: here, their role is limited to providing concrete base cases for type-directed definitions.
As we mentioned in the introduction, we distinguish positive
and negative types: our arrows and unit type are negative whereas our sums and
integers are positive.
This distinction, which has a clear logical motivation, matters
when defining certain objects in our proofs.

\label{sec:colors}
\paragraph{Notation} We color syntactic objects (and the meta-language variables used to
denote them) in \obcolor{} to make it easier to distinguish
object-level constructs from meta-level constructs.  This becomes
particularly useful after Section~\ref{sec:nbr-variant1},
where we reify the meta-language into a program.
We never mix
identifiers differing only by their color -- you lose little if you do
not see the colors.

\subsection{Realizability interpretations}

In mathematical logic, the name \emph{realizability} describes a
family of computational interpretations of logic, introduced by
Kleene in 1945 to understand intuitionistic arithmetic.
Consider a logic with formulas $\ta$, and a set of syntactic inference
rules for judgments of the usual form $\ca \mathrel{\ob{\der}} \ta$. We may wonder
whether this logic is sound -- cannot prove contradictions -- and
perhaps hope that it is constructive in some suitable sense.
A realizability interpretation is given by a choice of syntax of
\emph{programs} $\ra$, and a set of \emph{realizers} $\tru \ta$ for each
formula $\ta$, such that (1) false formulas have no realizers and
(2) provable closed formulas $\mathrel{\ob{\der}} \ta$ have a realizer
$\ra \in \tru \ta$. This yields a mathematical model that justifies
the inference rules, and provides a computational interpretation
of the logic.

The realizability relation $\ra \in \tru \ta$ -- often written
$\ra \realizes \ta$ -- plays a very different role from a typing
judgment $\judgty{} \ra \ta$.
A type system ought to be modular and checking types ought to be decidable.
Realizability can be defined in arbitrary ways, and tends to be wildly undecidable.
Indeed, typing is a structural property describing how a program is
built whereas realizability is only concerned with reduction,
regardless of what a realizer actually is. For example, the program
$\ifte{\efalse}{\etrue}{42}$ is clearly ill-typed whereas, from a
reduction standpoint, it amounts to the integer $\ob{42}$ and shall
therefore be a realizer of the type $\tnat$.

In computer science, realizability interpretations have been used to
build models of powerful dependently-typed logics with set-theoretic
connectives in the PRL family of proof assistants, or the soundness of
advanced type systems for languages with
side-effects~\citep{brunel-phd,lepigre:pml-2016}.
\emph{Classical realizability}, presented next, is a sub-family of
realizability interpretations whose realizers are abstract
machines. It was introduced by Krivine in the 90s to study the
excluded-middle, using control operators to realize it and thus
providing (constructive) models of classical
logic~\citep{krivine:realizability-classical-logic} and
Zermelo-Fraenkel set theory~\citep{krivine:ZF}.

\subsection{The Krivine machine: classical realizability in the negative fragment}
\label{sub:rea-negative}

As argued in the introduction, reasoning about $\lambda$-terms leads
to considering not only the \emph{terms} but also the \emph{contexts}
(with which the interaction occurs) and \emph{whole programs} (which
are just asked to normalize). Abstract machines, such as the Krivine
Abstract Machine~\citep*{krivine:machine}\footnote{Although the
  reference paper is dated from 2007, as its author said, the KAM “was
  introduced twenty-five years ago”. Its first appearance was in an
  unpublished but widely circulated note \citep*{krivine:KAM-85}
  available on the author’s web page.}
given in \figurename~\ref{fig:KAM}, turn this conceptual distinction into
explicit syntax: it includes \textbf{t}erms $\ea$ but also an explicit
syntactic category of \textbf{e}valuation contexts $\ka$ as stacks of arguments (context formers are
$\cons \ea \ka$, which pushes an applied argument $\ea$ into the
context $\ka$, and an end-of-stack symbol $\kstar$) as well as
\emph{machine configurations} $\mach \ea \ka$,
pairing a term and a context together to compute.

\begin{figure}
  \begin{minipage}[t]{.4\linewidth}
    \hspace{5em}\textbf{Syntax}
    \begin{mathpar}
      \begin{array}{l@{\quad\gramdef\quad}lr}
        \ea & \eva \mid \lam \eva \ea \mid \app \ea \eb & terms \\
        \ka & \kstar \mid \cons \ea \ka                 & contexts \\
        \ma & \mach \ea \ka                             & machines \\
      \end{array}
    \end{mathpar}
  \end{minipage}
  \hfill
  \begin{minipage}[t]{.56\linewidth}
    \hspace{5em}\textbf{Machine Reduction}
    \begin{align}
      \mach {\app \ea \eb} \ka
      & \rto \mach \ea {\cons \eb \ka}
      \label{red:app}
      \\
      \mach {\lam \eva \ea} {\cons \eb \ka}
      & \rto \mach {\subst \ea \eva \eb} \ka
      \label{red:lam}
    \end{align}
  \end{minipage}
  \caption{The Krivine Abstract Machine}
  \label{fig:KAM}
\end{figure}

\paragraph{Remark} For this simplified abstract machine, the terms
$\ea$ of the machine configuration are exactly the usual (untyped)
$\lambda$-terms -- with only function types, no sums, integers, or unit type.
This is not the case in general, so we are careful to distinguish
the two syntactic categories; given a closed $\lambda$-term $\ea$, we write
$\transl \ea$ for its (identity) embedding as a machine term.
\begin{version}{\Long}
Traditionally, logicians define type systems on top of realizers as
a way to systematically realize large classes of formulas. In these
systems, the terms \emph{are} the realizers. However, recent work in
classical realizability~\citep{miquel:implicative-algebras}
demonstrates the value of decoupling both worlds, whereby one
effectively compiles terms to realizers and does not even need
a syntax for realizers, which are seen as black boxes.
\end{version}

Figure~\ref{fig:KAM} also defines the reductions of this machine.
Function application in a configuration $\mach {\app \ea \eb} \ka$ merely
pushes its argument in the context $\mach \ea {\cons \eb \ka}$.
Such application contexts get reduced when they meet
a $\lambda$-expression:
$\mach {\lam \eva \ea} {\cons \eb \ka}$ reduces to
$\mach {\subst \ea \eva \eb} \ka$.
This abstract machine simulates weak call-by-name reduction in the
$\lambda$-calculus: the $\lambda$-term $\ea$ reduces to $\eb$ in the
weak call-by-name strategy if and only if the configuration
$\mach {\transl \ea} \kstar$ reduces to $\mach {\transl \eb} \kstar$.
(In particular, if $\mach {\transl \eb} \kstar$ does not reduce, then
$\eb$ it is a weak normal-form of $\ea$.)
A machine configuration is a self-contained object, which has no other
interaction with the outside world. Our realizability interpretation
is parametrized over the choice of a \emph{pole} $\Pole$, a set
of configurations exhibiting a specific, good or bad, behavior.

\begin{definition}
  A \emph{realizability structure}~\citep{krivine:realizability-structure} is a triple
  $(\mathbb{T}, \mathbb{E}, \Pole)$ where $\mathbb{T}$ is a set of
  machine terms, $\mathbb{E}$ a set of machine contexts,
  and $\Pole$ is a set of configurations $\mach \ea \ka$ with
  $\ea \in \mathbb{T}$ and $\ka \in \mathbb{E}$ such that:
  \begin{itemize}
  \item $\mathbb{T}, \mathbb{E}$ are closed by context-formers:
    \begin{mathpar}
      \kstar \in \mathbb{E}

      \ea \in \mathbb{T} \ \wedge\ \ka \in \mathbb{E} \implies (\cons \ea \ka) \in \mathbb{E}
    \end{mathpar}
  \item $\Pole$ is closed by anti-reduction (reduction is given by rules~(\ref{red:app}) and~(\ref{red:lam}) in Figure~\ref{fig:KAM}):
    \begin{equation}\label{eqn:anti-red}
      \mach {\ea'} {\ka'} \in \Pole
      \ \wedge\ %
      \mach \ea \ka \rto \mach {\ea'} {\ka'}
      \implies
      \mach \ea \ka \in \Pole
      \end{equation}
  \end{itemize}
\end{definition}
For instance, to capture ``configurations that reduce to a normal form'', we would use the realizability structure where $\Terms$ is the set of closed terms, $\Coterms$ is the set of closed contexts, and $\Pole$ is $\left\{ \ma \,\middle|\, \exists\, \map\!, \ma \rtos \map \land \lnot(\exists\, \mapp\!, \map \rto \mapp)\right\}$.
Another useful definition of the pole is, for example, ``being weakly
normalizing with $\ob{\injName 1}$ or $\ob{\injName 2}$ as the head constructor of the normal form'', formally $\Pole \eqdef \left\{ \ma \,\middle|\, \exists\, \map\!, \ma \rtos \map \land \lnot(\exists\, \mapp\!, \map \rto \mapp) \land (\exists \ea \, \ka, \map = \mach{\inj 1 \ea} \ka \lor \map = \mach{\inj 2 \ea} \ka) \right\}$.
Instantiating Theorem~\ref{thm:fundamental-thm} with this pole lets us deduce that closed terms of type $\sum \ta \tb$ have a canonical form.

The intuition behind the anti-reduction closure is that we are interested in
behavioral properties of machines: normalizing, reducing to
an integer or a well-typed value, performing some side effect (assuming that the machine allows it), \etc
Anti-reduction reflects the fact that such properties may require a few reduction steps to be fulfilled: a machine \emph{eventually}
normalize (to an integer or to a well-typed value),
\emph{eventually} perform some side effect, \etc

\paragraph{Orthogonality}
Configurations are self-contained, whereas a term needs a context to compute,
and vice-versa. For $\mathcal{T} \subseteq \Terms$ an arbitrary set of
terms, we define its \emph{orthogonal} $\orth{\mathcal{T}}$ as the set of contexts that
compute well against terms in $\mathcal{T}$, in the sense that
they end up in the pole -- they exhibit the property we are interested in.
The orthogonal of a set of contexts
$\mathcal{E} \subseteq \Coterms$ is defined by symmetry:

\noindent
\begin{minipage}{.49\linewidth}
\begin{equation} \label{eqn:orthT-Set}
  \orth{\mathcal{T}} \eqdef \{ \ka \in \mathbb{E}
    \mid \forall \ea \in \mathcal{T}, \mach{\ea}{\ka} \in \Pole \}
\end{equation}
\end{minipage}
\hfill
\begin{minipage}{.49\linewidth}
\begin{equation}  \label{eqn:orthF-Set}
  \orth{\mathcal{E}} \eqdef \{ \ea \in \mathbb{T}
    \mid \forall \ka \in \mathcal{E}, \mach{\ea}{\ka} \in \Pole \}
\end{equation}
\end{minipage}
\smallskip

Orthogonality behaves in a way that is similar to intuitionistic
negation: for any set $S$ of either terms or contexts we have
$S \subseteq \biorth{S}$ (just as $P \implies \neg{\neg{P}}$) and this
inclusion is strict in general. If one understands the pole as
defining a notion of \emph{observation}, the bi-orthogonals of a given
(co)-term are those that have the same observable behavior. In
particular, we think of sets $S$ that are \emph{stable by
bi-orthogonality} ($S$ is isomorphic to $\biorth{S}$) as observable
predicates on terms or contexts: they define an extensional property
that is not finer-grained than what we can observe through the pole.
In particular, sets $\orth{S}$ that are themselves orthogonals are stable by
bi-orthogonality: $\triorth{S} \subseteq \orth{S}$ (just as
$\neg\neg\neg{P} \implies \neg{P}$), so $\triorth{S} \eqset \orth{S}$, where we write $S \eqset T$ to say that the sets $S$
and $T$ are in bijection and reserve the equality symbol to the
definitional equality.

\begin{example}
For any pole $\Pole$, the singleton set \(S \eqdef \{ \lam{\eva}{\eva} \}\) is
not stable by bi-orthogonality as \(\app{(\lamtwo \evb \eva \eva)}{\omega}\)
does not belong to~$S$ whereas it belongs to~$\biorth S$.
Indeed, we have
\[
\begin{array}{r@{\iff}l}
  \app{(\lamtwo \evb \eva \eva)}{\omega} \in \biorth{\{ \lam \eva \eva \}}
  & \forall \ka \in \orth{\{ \lam \eva \eva \}},
    \mach{\app{(\lamtwo \evb \eva \eva)}{\omega}} \ka \in \Pole \\
  & \forall \ka,\; \mach{\lam \eva \eva} \ka \in \Pole \to
    \mach{\app{(\lamtwo \evb \eva \eva)}{\omega}} \ka \in \Pole \\
\end{array}
\]
By anti-reduction~(\ref{eqn:anti-red}),
\(
  \mach{\app{(\lamtwo \evb \eva \eva)}{\omega}} \ka
  \rto^{(\ref{red:app})} \mach{\lamtwo \evb \eva \eva}{\cons \omega \ka}
  \rto^{(\ref{red:lam})} \mach{\lam \eva \eva} \ka  \in \Pole
\),
we therefore conclude that
$\mach{\app{(\lamtwo \evb \eva \eva)}{\omega}} \ka \in \Pole$.
Intuitively, the closure $\biorth S$ is the set of terms that cannot
be distinguished from the term $\lam \eva \eva$ by picking a context
and observing whether the resulting machine is in the pole. Here, it
allows us to encompass all the functions that behave like the identity
function from the standpoint of the observation defined by the
pole.

\end{example}

We can now give a (classical) realizability interpretation of the
types $\ta, \tb, \dots$ of the simply-typed $\lambda$-calculus (again, no sum, integers nor unit),
parametrized over a choice of realizability structure
$(\mathbb{T}, \mathbb{E}, \Pole)$. Rather than just defining sets of
realizers $\tru \ta$, we also define define a set of contexts
$\fal \ta$ called the \emph{falsity witnesses} of $\ta$.
For consistency, the set of terms $\tru \ta$ (the realizers of $\ta$) is called the \emph{truth witnesses} of $\ta$.
They are called this way because, from a logical point of view, $\tru \ta$
contains \emph{justifications} for $\ta$, while $\fal \ta$ contains
\emph{refutations} for $\ta$. Their definitions imply that a term in
$\tru \ta$ and a context in $\fal \ta$ can interact
according to the interface $\ta$.
\begin{align}
  \falv {\fun \ta \tb}
  & \eqdef \{ \cons \eb \ka \in \Coterms \mid \eb \in \tru \ta, \ka \in \fal \tb \}
  \label{eqn:falv-fun-Set} \\
  \tru {\fun \ta \tb}
  &\eqdef \orth{\falv {\fun \ta \tb}}
  = \{ \ea \in \Terms \mid \forall \eb \in \tru{\ta}, \forall \ka \in \fal{\tb}, \mach{\ea}{\cons \eb \ka} \in \Pole \}
  \label{eqn:tru-fun-Set} \\
  \fal {\fun \ta \tb}
  &\eqdef \orth{\tru {\fun \ta \tb}}
  = \biorth{\falv {\fun \ta \tb}}
  \label{eqn:fal-fun-Set}
\end{align}

Truth and falsity witnesses are both defined in terms of a smaller set
$\falv {\fun \ta \tb}$ of \emph{falsity values}, contexts of the
form $\cons \ea \ka$ where $\ea$ is a truth witness for $\ta$, and
$\ka$ a falsity witness for $\tb$.

From a programming point of view, one can think of $\tru \ta$ as a set
of producers of $\ta$, and $\fal \ta$ as a set of consumers of
$\ta$. For example, the falsity values of the arrow type,
$\falv{\fun \ta \tb}$, can be understood as ``to consume a function
$\fun \ta \tb$, one shall produce an $\ta$ and consume a $\tb$''. Logically,
the core argument in a refutation $\fun \ta \tb$ is
a justification of $\ta$ paired with a refutation of $\tb$.
Notice that $\tru{\fun \ta \tb}$ and $\fal {\fun \ta \tb}$ are both
stable by bi-orthogonality, with
$\fal {\fun \ta \tb} = \orth {\tru {\fun \ta \tb}}$ by definition and
$\tru {\fun \ta \tb} \eqset \orth {\fal {\fun \ta \tb}}$ from
$\triorth{S} \eqset \orth{S}$.

\begin{example}
  Additionally, let us consider the following truth values for the
  types of natural numbers and booleans:
\[
  \truv{\tbool} \triangleq \{0, 1\}
\qquad
  \truv{\tnat}  \triangleq \mathbb{N} \: (= \{0, 1, 2, 3, \ldots\}) 
\]

We have:
\vspace{-\baselineskip}

\noindent
\begin{minipage}[t]{.45\textwidth}
\begin{align*}
\ka \in \fal{\tbool}
  &\Leftrightarrow \ka \in \orth {\truv{\tbool}} \\
  &\Leftrightarrow \forall \bva \in \truv{\tbool}, \mach \bva \ka \in \Pole \\
  &\Leftrightarrow \mach 0 \ka \in \Pole \wedge \mach 1 \ka \in \Pole
\end{align*}
\end{minipage}
\begin{minipage}[t]{.45\textwidth}
\begin{align*}
\ea \in \tru{\tnat}
  &\Leftrightarrow \ea \in \biorth {\truv{\tnat}} \\
  &\Leftrightarrow \binder{\forall}{\ka}{(\forall \nva \in \truv{\tnat}, \mach \nva \ka \in \Pole) \to
                                \mach \ea \ka \in \Pole} \\
  &\Leftrightarrow \binder{\forall}{\ka}{(\binder{\forall}{\nva \in \mathbb{N}} {\mach \nva \ka \in \Pole}) \to
                                \mach \ea \ka \in \Pole}
\end{align*}
\end{minipage}

Unfolding definitions,
\begin{align*}
\ea \in \tru{\fun \tnat \tbool}
  &\Leftrightarrow \ea \in \orth {\falv{\fun \tnat \tbool}} \\
  &\Leftrightarrow \binder{\forall}{\sa \in \falv{\fun \tnat \tbool}}{\mach \ea \sa \in \Pole} \\
  &\Leftrightarrow \binder{\forall}{\eb \in \tru{\tnat}}{\binder{\forall}{\ka \in \fal{\tbool}}{\mach \ea {\cons \eb \ka} \in \Pole}}
\end{align*}
we conclude that a term $\ea$ belongs to $\tru{\fun \tnat \tbool}$ if
it goes in the pole whenever we put it against a stack formed of a
term that behaves like a natural number (its argument) and a context
awaiting a Boolean to go into the pole (the continuation to apply to
the output of the function $\ea$).

\end{example}

Realizability interpretations must ensure that provable formulas have
a realizer. This is given by a Fundamental Theorem:
\begin{theorem}[Fundamental Theorem]
  \label{thm:fundamental-thm}
  If $\judgty \, \ea \ta$ then $\transl{\ea} \in \tru \ta$.
\end{theorem}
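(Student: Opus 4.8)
The plan is to prove the statement by induction on the typing derivation, but first we must strengthen it to handle open terms, since the stated version over closed terms offers no usable induction hypothesis under binders. Say that a substitution $\sigma$ \emph{realizes} a context $\ca$ when $\sigma(\eva) \in \tru{\ta'}$ for every $(\eva \of \ta') \in \ca$. I would then prove the generalized adequacy statement: \emph{if $\judgty \ca \ea \ta$ and $\sigma$ realizes $\ca$, then the machine term obtained by translating $\ea$ and substituting according to $\sigma$ belongs to $\tru \ta$}. The original theorem is the special case of an empty context and empty substitution. This generalization presupposes that $\transl{\cdot}$ commutes with substitution (trivially so on the pure fragment, where it is the identity embedding), which I would record as a preliminary lemma. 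The variable case is then immediate, since $\transl{\eva}$ instantiated by $\sigma$ is $\sigma(\eva) \in \tru \ta$ by hypothesis on $\sigma$.

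For the negative connectives, the recurring move is to exploit that $\tru \ta$ is defined as an \emph{orthogonal} of the small set of falsity values, so that membership can be checked against falsity values rather than arbitrary contexts, combined with the closure of $\Pole$ under anti-reduction. Concretely, in the abstraction case I would show $\lam \eva \ea \in \tru{\fun \ta \tb} = \orth{\falv{\fun \ta \tb}}$ by taking an arbitrary falsity value $\cons \eb \ka$ with $\eb \in \tru \ta$ and $\ka \in \fal \tb$; step (\ref{red:lam}) sends $\mach{\lam \eva \ea}{\cons \eb \ka}$ to $\mach{\subst \ea \eva \eb}{\ka}$, the induction hypothesis on the premise $\judgty{\ca, \eva \of \ta}{\ea}{\tb}$ with $\sigma$ extended by $\eva \mapsto \eb$ gives $\subst \ea \eva \eb \in \tru \tb$, whence $\mach{\subst \ea \eva \eb}{\ka} \in \Pole$ because $\fal \tb = \orth{\tru \tb}$, and anti-reduction (\ref{eqn:anti-red}) transports this back to the original configuration. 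Dually, the application case uses step (\ref{red:app}): given $\ka \in \fal \tb$, the context $\cons \eb \ka$ is a falsity value of $\fun \ta \tb$, so testing $\ea \in \tru{\fun \ta \tb}$ against it lands in $\Pole$, and anti-reduction yields $\mach{\app \ea \eb}{\ka} \in \Pole$; since this holds for all $\ka \in \fal \tb$ and $\tru \tb \eqset \orth{\fal \tb}$, we conclude $\app \ea \eb \in \tru \tb$.

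The positive connectives follow the dual pattern, where $\tru \ta = \biorth{\truv \ta}$ is a bi-orthogonal closure: introductions land directly in the truth values $\truv \ta$ (an injection $\inj i {\eaidx i}$ for sums, a numeral for $\tnat$) and hence in their closure, while eliminations are the delicate part. I expect sum elimination to be the main obstacle — it is precisely the case where the naive type-indexed definition fails to be well-founded, as discussed in the introduction, and where bi-orthogonality earns its keep. To place $\ddesumfam \ea \eva \eb$ in $\tru \tc$ I would test it against an arbitrary $\ka \in \fal \tc$, wrap $\ka$ into the case-analysis context provided by the machine, and show that this compound context is a falsity witness of $\sum \ta \tb$ by checking it against the truth values $\inj i {\eaidx i}$ only, using the two branch hypotheses together with anti-reduction across the case step. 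Membership of $\ea$ in $\tru{\sum \ta \tb} = \biorth{\truv{\sum \ta \tb}}$ then forces the whole configuration into $\Pole$, and a final anti-reduction closes the case. The crucial point, and the reason the argument remains well-founded, is that the branches are tested at the fixed type $\tc$ supplied by the hypotheses rather than by re-entering the definition of reducibility at $\sum \ta \tb$ or at the larger types that plagued the naive approach.
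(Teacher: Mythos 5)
Your proposal is correct and follows essentially the same route as the paper: the theorem is obtained as the instance at the empty context of an adequacy lemma proved by induction on the typing derivation, with each case combining anti-reduction closure of the pole with orthogonality against falsity (resp.\ truth) \emph{values}. Your sketch of the sum-elimination case goes beyond what the paper proves at this point (Theorem~\ref{thm:fundamental-thm} is stated for the negative fragment on the Krivine machine, where case-analysis contexts are not even expressible), but it correctly anticipates the treatment the paper gives later using the $\mumutilda$ machine.
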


To prove the fundamental theorem by induction on the typing
derivation, we need to handle open terms (in the abstraction case).
Typing environments
$\ca$ are realized by \emph{substitutions} $\substa$ mapping
term variables to terms as expected:
$
  \substa \in \tru \ca
  \eqdef
  \forall (\eva : \ta) \in \ca,
  \;
  \substa(\eva) \in \tru \ta
$.
The fundamental theorem is then a direct corollary of the more precise
Adequacy Lemma:
\begin{lemma}[Adequacy]
  \label{thm:KAM-adequacy}
  If $\judgty \ca \ea \ta$ then, for any $\substa \in \tru \ca$,
  we have $\transl{\ea}[\substa] \in \tru \ta$.
\end{lemma}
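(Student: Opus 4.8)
The plan is to proceed by induction on the typing derivation $\judgty \ca \ea \ta$, treating the substitution $\substa \in \tru \ca$ as a parameter that we are free to extend as we descend under binders. The single fact I will use repeatedly is the \emph{interaction} property that falls out of the definitions: for every type $\ta$, a truth witness $\ea \in \tru \ta$ and a falsity witness $\ka \in \fal \ta$ always produce a configuration $\mach \ea \ka \in \Pole$. This is immediate from $\fal \ta = \orth{\tru \ta}$, and conversely the stability by bi-orthogonality $\tru \ta \eqset \orth{\fal \ta}$ lets me \emph{establish} membership in $\tru \ta$ by checking orthogonality against every $\ka \in \fal \ta$. Together with the anti-reduction closure~(\ref{eqn:anti-red}) of the pole, this reduces each case to exhibiting a couple of machine reductions. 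At this point in the development only the negative (function) fragment, together with the base types, has been interpreted, so the relevant cases are the variable, abstraction, and application rules; the injection and case-analysis rules are discharged once their interpretations are fixed.

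The variable case is immediate: $\transl{\eva}[\substa] = \substa(\eva)$, which lies in $\tru \ta$ by the very definition of $\substa \in \tru \ca$. For the abstraction rule, with premise $\judgty{\ca, \eva \of \ta}{\eap}{\tb}$, I must show $\transl{\lam \eva \eap}[\substa] \in \tru{\fun \ta \tb} = \orth{\falv{\fun \ta \tb}}$, so I pick an arbitrary falsity value $\cons \eb \ka$ with $\eb \in \tru \ta$ and $\ka \in \fal \tb$ and aim for $\mach{\lam \eva {\transl{\eap}[\substa]}}{\cons \eb \ka} \in \Pole$. Reduction~(\ref{red:lam}) sends this configuration to $\mach{\transl{\eap}[\substa']}{\ka}$, where $\substa'$ extends $\substa$ by mapping $\eva$ to $\eb$; since $\eb \in \tru \ta$, we have $\substa' \in \tru{\ca, \eva \of \ta}$, so the induction hypothesis gives $\transl{\eap}[\substa'] \in \tru \tb$. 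The interaction property against $\ka \in \fal \tb$ then places $\mach{\transl{\eap}[\substa']}{\ka}$ in the pole, and anti-reduction propagates this back to the original configuration.

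The application rule is the crux, and it is where the choice to define $\tru{\fun \ta \tb}$ as $\orth{\falv{\fun \ta \tb}}$ pays off. From the premises $\judgty \ca \eap {\fun \ta \tb}$ and $\judgty \ca \eb \ta$ the induction hypothesis yields $\transl{\eap}[\substa] \in \tru{\fun \ta \tb}$ and $\transl{\eb}[\substa] \in \tru \ta$. To show $\transl{\app \eap \eb}[\substa] \in \tru \tb$, stability lets me fix $\ka \in \fal \tb$ and reduce to showing $\mach{\app{\transl{\eap}[\substa]}{\transl{\eb}[\substa]}}{\ka} \in \Pole$. Reduction~(\ref{red:app}) rewrites this to $\mach{\transl{\eap}[\substa]}{\cons{\transl{\eb}[\substa]}{\ka}}$, and the context $\cons{\transl{\eb}[\substa]}{\ka}$ is by construction a member of $\falv{\fun \ta \tb}$. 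Since $\transl{\eap}[\substa]$ is orthogonal to \emph{every} falsity value of $\fun \ta \tb$, this configuration is in the pole, and anti-reduction concludes.

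I expect the real difficulty to be conceptual rather than computational. Each case collapses to a reduction step and one appeal to orthogonality, so there is almost no calculation to grind through; the subtlety is that this smoothness is entirely manufactured by the definitions. Defining truth witnesses as the orthogonal of the \emph{falsity values} (rather than of the full falsity witnesses) is precisely what makes the application case a one-liner, and phrasing the goal as membership in $\orth{\fal \tb}$ via bi-orthogonal stability is what keeps the induction uniform across types. The one genuinely delicate bookkeeping point is the treatment of open terms: the induction must be generalized over all realizing substitutions $\substa \in \tru \ca$, so that the abstraction case can extend $\substa$ with a fresh hypothesis $\eb \in \tru \ta$ — which is exactly why the Fundamental Theorem is obtained through this more general Adequacy Lemma.
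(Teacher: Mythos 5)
Your proposal is correct and follows essentially the same route as the paper's proof: induction on the typing derivation, unfolding $\tru{\fun \ta \tb}$ as $\orth{\falv{\fun \ta \tb}}$, one machine reduction step per case, and the anti-reduction closure of the pole to conclude. The only (immaterial) difference is that in the application case you conclude directly from orthogonality to the falsity \emph{value} $\cons{\eb[\substa]}{\ka}$, whereas the paper first notes that this context also lies in $\fal{\fun \ta \tb}$ before cutting it against the term.
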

\begin{proof}
  The proof proceeds by induction on the derivation.
  We show the two cases related to the function type.

  \paragraph{Abstraction}
  \begin{mathpar}
    \obinfer
    {\ca, \eva : \ta \der \ea : \tb}
    {\ca \der {\lam \eva \ea} : {\fun \ta \tb}}
  \end{mathpar}
  Our goal is to prove that
  $\obsubst{(\lam \eva \ea)}{\substa} \in \tru {\fun \ta \tb}$.
  Since $\tru {\fun \ta \tb} \eqreas{eqn:tru-fun-Set} \orth{\falv {\fun \ta \tb}}$,
  we have to
  prove that for any context $\cons \eb \ka$ with
  $\eb \in \tru \ta$ and $\ka \in \fal \tb$ we have
  $\mach {(\lam x \ea)[\substa]} {\cons \eb \ka} \in \Pole$. By
  anti-reduction~\eqref{eqn:anti-red} applied to~\eqref{red:lam}, it suffices to show that
  $\mach {\obsubst{\ea}{\substa, \by \eb \eva}} \ka \in \Pole$. The extended
  substitution $\substa, \by \eva \eb$ is in
  $\env {\ca, \eva \of \ta}$, so by induction hypothesis we have that
  $\obsubst \ea {\substa, \by \eb \eva} \in \tru \tb$ -- the body of the function
  behaves well at $\tb$. This suffices to prove
  $\mach {\obsubst \ea {\substa, \by \eb \eva}} \ka \in \Pole$, given that $\ka$
  is in $\fal \tb$, which is precisely the orthogonal of $\tru \tb$.

  \paragraph{Application}
  \begin{mathpar}
    \obinfer
    {\ca \der \ea : {\fun \ta \tb}
     \\
     \ca \der \eb : \ta}
    {\ca \der {\app \ea \eb} : \tb}
  \end{mathpar}
  We want to prove that $(\app \ea \eb)[\substa]$ is in $\tru \tb$,
  that is: given any $\ka \in \fal \tb$, we have
  $\mach {\obsubst{(\app \ea \eb)} \substa} \ka \in \Pole$. By anti-reduction~\eqref{eqn:anti-red} applied to~\eqref{red:app}, it
  suffices to prove
  $\mach {\obsubst \ea \substa} {\cons {\obsubst \eb \substa} \ka} \in \Pole$, which
  we obtain by induction hypothesis: $\obsubst \eb \substa$ is in $\tru \ta$, so
  $\cons {\obsubst \eb \substa} \ka$ is in $\tru \ta \times \fal \tb \eqreas{eqn:falv-fun-Set} \falv {\fun \ta \tb}$ which is included in $\biorth{\falv {\fun \ta \tb}} \eqreas{eqn:fal-fun-Set} \fal {\fun \ta \tb}$ and, since  $\obsubst \ea \substa$ is in
  $\tru {\fun \ta \tb}$, we can conclude that the
  configuration is in the pole.
\end{proof}

We can obtain various meta-theoretic results as consequences of the
fundamental theorem, by varying the choice of the realizability
structure $(\mathbb{T}, \mathbb{E}, \Pole)$. In particular:

\begin{corollary}
  \label{thm:weak-normalization}
  Closed terms in the simply-typed $\lambda$-calculus
  $\judgty{} \ea \ta$ are weakly normalizing.
\end{corollary}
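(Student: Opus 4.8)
The plan is to obtain weak normalization as an instance of the Fundamental Theorem (Theorem~\ref{thm:fundamental-thm}) by choosing the realizability structure that \emph{observes reaching a normal form}. Concretely I take $\mathbb{T}$ to be the closed machine terms, $\mathbb{E}$ the closed contexts, and the normalizing pole $\Pole \eqdef \{ \ma \mid \exists\, \map,\ \ma \rtos \map \land \lnot(\exists\, \mapp,\ \map \rto \mapp)\}$ already introduced in the text. First I would check that this triple really is a realizability structure: closure of $\mathbb{T},\mathbb{E}$ under the context-formers is immediate, and closure of $\Pole$ under anti-reduction~\eqref{eqn:anti-red} is essentially definitional, since if $\mach{\ea}{\ka}\rto\mach{\ea'}{\ka'}$ and the latter reaches a normal configuration $\map$, then the former reaches the same $\map$ through one extra step.

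With the structure fixed, the Fundamental Theorem applied to $\judgty{}{\ea}{\ta}$ yields $\transl{\ea}\in\tru{\ta}$. To convert this into a statement about $\ea$ itself, I would run $\transl{\ea}$ against the empty context and read off normalization from membership in the pole. This rests on the key lemma that the empty continuation is a falsity witness at every type,
\[
  \text{for all } \ta,\qquad \kstar \in \fal{\ta}.
\]
Granting it, from $\transl{\ea}\in\tru{\ta}$ and $\fal{\ta}=\orth{\tru{\ta}}$ we get $\mach{\transl{\ea}}{\kstar}\in\Pole$, so the machine reaches a normal configuration; since the Krivine machine simulates weak call-by-name reduction (Section~\ref{sub:rea-negative}), this normal configuration exhibits a weak normal form of $\ea$, which is exactly weak normalization.

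The main obstacle is the lemma $\kstar\in\fal{\ta}$, and it is where the real work lies. It manifestly fails for an \emph{arbitrary} closed term -- $\mach{\app{\omega}{\omega}}{\kstar}$ diverges -- so one genuinely exploits that we quantify only over \emph{truth witnesses}. I would argue by induction on $\ta$. At base types ($\tnat,\tbool,\tunit$) the truth values are already values, so $\mach{\va}{\kstar}$ is normal and $\kstar\in\orth{\truv{\ta}}=\fal{\ta}$. At an arrow $\fun{\ta}{\tb}$ I must show $\mach{\ea}{\kstar}\in\Pole$ for every $\ea\in\tru{\fun\ta\tb}$. Here I would combine two ingredients: (i) every type of our calculus is inhabited, so $\tru{\ta}\neq\emptyset$, and the induction hypothesis gives $\kstar\in\fal{\tb}$; picking $\eb\in\tru{\ta}$ we then have $\cons{\eb}{\kstar}\in\falv{\fun\ta\tb}$, hence $\mach{\ea}{\cons{\eb}{\kstar}}\in\Pole$ by definition of $\tru{\fun\ta\tb}$; and (ii) a purely operational lemma on the machine: for closed $\ea$, divergence of $\mach{\ea}{\kstar}$ forces divergence of $\mach{\ea}{\cons{\eb}{\kstar}}$. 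Indeed the two runs perform identical steps except at the very bottom of the stack, parting ways only once the stack has shrunk back to its original bottom with a $\lambda$ at the head, where the empty-stack run halts (a value facing $\kstar$) while the other fires~\eqref{red:lam}; so if $\mach{\ea}{\kstar}$ never halts it never reaches such a configuration, and $\mach{\ea}{\cons{\eb}{\kstar}}$ reproduces the same infinite run.

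Taking the contrapositive of (ii), convergence of $\mach{\ea}{\cons{\eb}{\kstar}}$ forces convergence of $\mach{\ea}{\kstar}$, giving $\mach{\ea}{\kstar}\in\Pole$ and thus $\kstar\in\fal{\fun\ta\tb}$, which closes the induction. Cleanly establishing the operational lemma (ii), and -- for the full calculus -- extending both the interpretation and this inductive argument to sums, where the empty context must still observe a constructor value, is the delicate point; once $\kstar\in\fal{\ta}$ is in hand, everything else is routine bookkeeping around the Fundamental Theorem.
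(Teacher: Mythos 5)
Your proposal is correct for the fragment the corollary actually covers, and its outer skeleton matches the paper's: instantiate the Fundamental Theorem at the normalizing pole, establish $\kstar \in \fal{\ta}$, cut against $\kstar$, and read the weak reduction sequence off the machine run. But you prove the key lemma by a genuinely different -- and much heavier -- route. The paper never does an induction on $\ta$: it builds the lemma into the choice of realizability structure by taking $\Terms$ to be \emph{only} the closed terms $\ea$ such that $\mach{\ea}{\kstar}$ normalizes (not all closed terms, as you chose). Since every $\tru{\ta}$ is by construction a subset of $\Terms$, the statement $\kstar \in \orth{\tru{\ta}} \eqset \fal{\ta}$ holds \emph{by definition}, uniformly in $\ta$, with no case analysis. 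You correctly observed that $\kstar \in \fal{\ta}$ ``manifestly fails for an arbitrary closed term''; that observation is exactly what motivates restricting $\Terms$ rather than proving the lemma the hard way. Your alternative costs three additional ingredients -- inhabitation of every type by a truth witness, the stack-bottom simulation lemma on the Krivine machine, and the induction on types -- and it is also more fragile: it silently depends on $\tru{\ta}$ being nonempty at every type, and it would break for a type whose truth witnesses are all of $\Terms$, such as $\tunit$ as later defined in the paper (where $\falv{\tunit} = \varnothing$ forces $\tru{\tunit} = \orth{\varnothing} = \Terms$, which contains divergent terms, so $\kstar \notin \orth{\tru{\tunit}}$ when $\Terms$ is all closed terms). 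What your route buys in exchange is a more elementary, purely operational explanation of \emph{why} the empty stack is a universal observer, without tuning the realizability structure to the theorem one wants to extract; what the paper's route buys is a one-line, type-uniform argument that scales unchanged to sums, unit, and empty types, which is the standard move in classical realizability.
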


\begin{proof}
  We choose a particular realizability structure by defining $\Terms$
  as the set of closed terms $\ea$ such that $\mach \ea \kstar$ is
  normalizing, $\Coterms$ as the set of closed contexts, and
  $\Pole$ as the set of closed configurations that reach a normal form.
  By definition of $\Terms$, we have that $\kstar$ is in
  $\fal \ta \eqreas{eqn:fal-fun-Set} \orth{\tru \ta}$, for any $\ta$: indeed, for
  any machine term $\ea \in \tru{\ta}$, we have $\tru{\ta} \subset \Terms$
  so, by definition, $\mach \ea \kstar \in \Pole$.

  Let $\ea \in \Terms$ be a closed simply-typed $\lambda$-term and $\der \ea \of \ta$ its typing derivation.
  By the Fundamental Theorem, $\der \ea \of \ta$ implies
  $\transl \ea \in \tru \ta$. Given that $\kstar \in \fal \ta$, we
  have $\mach {\transl \ea} \kstar \in \Pole$, that is,
  $\mach {\transl \ea} \kstar$ reduces to a (closed) normal form
  $\mach \eb \ka$ -- in fact $\eb$ is exactly $\transl \eb$.
  This can only be a normal form if $\eb$
  is a $\lambda$-abstraction and $\ka = \kstar$. We have deduced
  that $\mach {\transl \ea} \kstar \rtos \mach {\transl \eb} \kstar$, so
  we know that $\ea \rtos \eb$ in the $\lambda$-calculus, and $\eb$ is
  a normal form.
\end{proof}

The Fundamental Theorem is also useful to establish various results
based on the shape of normal forms for certain connectives, by varying
the choice of pole and realizability structure. For example, one could
easily prove that closed booleans reduce to $\mathsf{true}$ or
$\mathsf{false}$, or that there is no closed term of the empty
type. See for example \citet*{munch-gdr-normalization}.

In this paper, we prove normalization of the simply-typed
$\lambda$-calculus with sums by showing that the configurations $\ma$
built from well-typed terms are in the pole $\Pole$ of normalizing
configurations. Reading off the computational content from its proof,
we shall discover a normalization function.

\subsection{The \mumutilda{} machine: classical realizability in the positive fragment too}

We mentioned that realizers can be untyped, but they do not
\emph{need} to be untyped. In fact, there is a beautiful way to extend
the type system for the simply-typed $\lambda$-calculus to terms of
the abstract machine, with a judgment for left-introduction rules of the form
$\ob{\ca \mid \ka : \ta \der \tb}$ expressing that the context $\ka$
consumes an input of type $\ta$ to produce an output of type $\tb$.
\begin{mathpar}
\obinfer
{\judgty \ca \eb \ta \\ \ca \mid \ka : \tb \der \tc}
{\ca \mid \cons \eb \ka : \fun \ta \tb \der \tc}

\obinfer
{ }
{\ca \mid \kstar : \ta \der \ta}
\end{mathpar}
This extension appeals to logicians: erasing terms, we recognize the
left-introduction rules of sequent calculus
\begin{mathpar}
\obinfer
{\ca \der \ta \\ \ca , \tb \der \tc}
{\ca , \fun \ta \tb \der \tc}

\obinfer
{ }
{\ca, \ta \der \ta}
\end{mathpar}
%

Unfortunately, there is no direct way to extend the Krivine abstract
machine with sums that would preserve this
correspondence: reduction can get stuck
unless we add commuting conversions.\footnote{Which is why classical
realizability is usually developed in System F, where datatypes are
emulated by Church encoding.}
For our realizability program to bear on the entirety of our source
language, we are thus led to search for a suitable calculus of
realizers first and, next, show how our source language translates to these.

\begin{figure}

\begin{mathpar}
  \begin{array}{c}
    \ma \in \Machines \gramdef \mach \ea \ka \qquad \text{machine} \medskip\\
    \begin{array}{@{}l@{\qquad}l}
      \begin{array}{l@{~}r@{~}l@{\quad}l}
        \ea,\eb \in \Terms
        & \gramdef & & \text{terms} \\
        & \mid & \eva & \text{variable} \\
        & \mid & \bmu{(\cons{x}{\kva})} \ma & \text{abstraction} \\
        & \mid & \inj i \ea & \text{sum} \\
        & \mid & \bmu{\kva} \ma & \text{thunk} \\
        & \mid & \ldots & 
      \end{array}
      &
      \begin{array}{l@{~}r@{~}l@{\quad}l}
        \ka,\kb \in \Coterms
        & \gramdef & & \text{co-terms} \\
        & \mid & \kva & \text{co-variable} \\
        & \mid & \cons \ea \ka & \text{application} \\
        & \mid & \bmutsumfam{\eva}{\ma} & \text{sum elimination} \\
        & \mid & \bmut{\eva}{\ma} & \text{force} \\
        & \mid & \ldots & 
      \end{array}
    \end{array}
  \end{array}
  \end{mathpar}

  \begin{mathpar}
  \infer
  {\mapidx 1 \rto \mapidx 2}
  {\ma[\mapidx 1] \rto \ma[\mapidx 2]}
  \end{mathpar}

\noindent
\begin{minipage}{.4\linewidth}
  \begin{align}
    \mach {\bmu \kva \ma} \ka & \rto \obsubst{\ma}{\by \ka \kva}
        \label{red:mu} \\
    \mach \ea {\bmut \eva \ma} & \rto \obsubst{\ma}{\by \ea \eva}
        \label{red:mut}
  \end{align}
\end{minipage}
\hfill
\begin{minipage}{.55\linewidth}
\begin{align}
    \mach {\bmu {(\cons \eva \kva)} \ma} {\cons \ea \ka}
    & \rto
    \obsubst{\ma}{\by \ea \eva, \by \ka \kva}
        \label{red:mu-cons} \\
    \mach {\inj i \ea} {\bmutsumfam \eva \ma}
    & \rto
    \obsubst{\maidx i}{\by \ea {\evaidx i}}
        \label{red:mut-sig}
  \end{align}
\end{minipage}
%
\caption{An untyped \mumutilda{} abstract machine}
\label{fig:mumutilda}
\end{figure}

Fortunately, the work of~\citet*{curien-herbelin} shows that the bias
toward negativity can be avoided thanks to an abstract machine called
$\mumutilda$. We define its syntax and dynamic semantics
(\figurename~\ref{fig:mumutilda}).\footnote{We use a slight
  improvement of the original syntax, where $\lambda$ is not
  a primitive anymore, due to~\citet*{munch:polarised-lambda-15}.}
Note that we are giving an untyped and unpolarized presentation of
$\mumutilda$. It is known that this presentation is non-confluent but,
as we explained, the language of realizers can have very wild computational
behaviors without issue.
People working with $\mumutilda$ directly, instead of as a language
of realizers, restrict the language to regain confluence by
typing (enforcing normalization), evaluation
strategies~\citep*{downen:tuto-mumutilde} or
polarization~\citep*{munch-phd}.

In the $\mumutilda$-calculus, some terms (or contexts) can capture the
context (or term) set against them. The term $\bmu \kva \ma$ binds its
context to the name $\kva$ and reduces to the machine configuration
$\ma$ (which may contain $\kva$). For example, in the Krivine machine, application $\app \ea \eb$
expects to be put against a context $\ka$, and this reduces to
$\mach \ea {\cons \eb \ka}$. In $\mumutilda$, application
$\app \ea \eb$ is not a primitive term-former but it can be encoded,
following the above reduction principle, as $\bmu \kva {\mach \ea
{\cons \eb \kva}}$. The symmetric construction exists for contexts
$\bmut \eva \ma$ that capture the term set against them.
Contexts are thus more powerful in the \mumutilda{}-machine than they
are in the Krivine machine; they are on an equal footing with
terms. We call them \emph{co-terms}, to express the idea
that they can drive reduction too.

Finally, each connective is defined by a constructor and
a destructor. The constructors for sums are the term
$\inj 1 \ea$ and $\inj 2 \ea$, as in the $\lambda$-calculus. The
destructor $\bmutsumfam \eva \ma$ matches on the sum
constructor, and reduces to one configuration or another depending on
its value. For functions, the constructor is the context former
$\cons{\ea}{\ka}$ -- negative types are defined by their
\emph{observations}, so the constructor builds a co-term. The
destructor, $\bmu {(\cons \eva \kva)} \ma$, matches on the
application context and reduces to a configuration. In particular,
$\lam \eva \ea$ is not a primitive, it can be encoded using the
destructor: $\bmu{(\cons \eva \kva)}{\mach \ea \kva}$.
\figurename~\ref{fig:mumutilda-cbn-compilation} gives a translation scheme
from plain $\lambda$-calculus terms to $\mumutilda$. This is not the
only possible choice -- as we see in Section~\ref{sec:nbr-all},
varying the choice of translation gives different evaluation
strategies.
\begin{figure}
  \begin{mathpar}
  \begin{array}{lll}
    \transl {\lam \eva \ea}
    & \eqdef
    & \bmu {(\cons \eva \kva)} {\mach {\transl \ea} \kva}
    \\
    \transl {\app \ea \eb}
    & \eqdef
    & \bmu \kva {\mach {\transl \ea} {\cons {\transl \eb} \kva}}
    \\
    \transl {\inj i \ea}
    & \eqdef
    & \inj i {\transl \ea}
    \\
    \transl {\desumfam \ea \eva \eb}
    & \eqdef
    & \bmu \kva {\mach {\transl \ea} {\vbmutsumlam {\eva_\idx} {\mach {\transl {\ebidx \idx}} \kva}}}
  \end{array}
  \end{mathpar}
  \caption{A compilation scheme from the $\lambda$-calculus to $\mumutilda$ (fragment with functions and sums)}
  \label{fig:mumutilda-cbn-compilation}
\end{figure}

\begin{example}[Commuting conversion]
Consider the term
$$\app {\pdesumlam \evc {\eva_\idx} {\lam \evb {\ea_\idx}}} \eb$$
In
the $\lambda$-calculus, the function application outside is stuck, and
unlocking it would require adding a commuting conversion to push it
under the case-split. In $\mumutilda$ (as in the sequent-calculus),
the translation of this term reduces without a hurdle (redex locations are underlined):
\[
  \begin{array}{@{}r@{\,}c@{\,}l@{}}
    & & \transl{\app {\pdesumlam \evc {\eva_\idx} {\lam \evb {\ea_\idx}}} \eb}
    \\
    & =
    & \bmu \kva {\mach
      {\bmu {\underline{\kvb}} {\mach {\strut \transl \evc}
                        {\bmutsumlam {\eva_\idx}
                           {\mach {\bmu {(\cons \evb \kvc)} {\mach {\transl {\ea_\idx}} \kvc}}  \kvb}}}}
      {\underline{\cons {\transl \eb} \kva}}}
    \\[.5em]
    & \rto^{(\ref{red:mu})}
    & \bmu \kva {\mach {\transl \evc}
      {\bmutsumlam {\eva_\idx}
      {\mach {\bmu {(\underline{\cons \evb \kvc})} {\mach {\transl {\ea_\idx}} \kvc}} {\underline{\cons {\transl \eb} \kva}}}}}
    \\[.5em]
    \text{\scriptsize (2 $\times$)} & \rto^{(\ref{red:mu-cons})}
    & \bmu \kva {\mach {\strut \transl \evc}
      {\bmutsumlam {\eva_\idx}
      {\mach {\transl {\obsubst {\ea_\idx} {\by \eb \evb}}} \kva}}}
    \\
  \end{array}
\]
\end{example}

For the Krivine machine we mention in Section~\ref{sub:rea-negative}
a simulation property: if $\mach t e$ reduces to a normal machine
$\mach u \kstar$, $\lambda$-term $u$ is the normal form of $t$ for the
weak call-by-name strategy. For the $\mumutilda$ calculus, a similar
result holds: if $\mach {\transl \ea} \alpha$ reduces to a normal
machine $\mach \eap \alpha$, with then $\eap$ can be easily translated
back to a $\lambda$-term that is equivalent to $\ea$ and is (at least)
in weak-head-normal form. Proving this, using for example results from
\citet*{munch:polarised-lambda-15}, requires building more technical
knowledge of $\mumutilda$ than what is presented in this article; here
we focus on computing the normal machine from
$\mach {\transl \ea} \alpha$.

\subsection{Interpreting types with abstract machines}
\label{sub:witnesses}
\label{sec:types}

In Section~\ref{sub:rea-negative}, we demonstrated classical
realizability for the simply-typed lambda-calculus with just function
types. To scale the argument to our whole calculus, we define for each
type $\ta$ a truth witness $\tru\ta \subseteq \Terms$ (justifications, or producers)
and a falsity witness $\fal\ta \subseteq \Coterms$ (refutations, or consumers).
They are orthogonal to each other, $\orth{\fal \ta} \eqset \tru \ta \label{eqn:fal-tru}$ and $\orth{\tru \ta} \eqset \fal \ta \label{eqn:tru-fal}$, and defined in terms of a more primitive set of truth or falsity \emph{value witnesses}, $\truv \ta \subseteq \Terms$ or $\falv \ta \subseteq \Coterms$.
More precisely, positive type formers (such as sums and integers) are defined by their truth value witnesses, whereas negative type formers (such as functions and the unit type) are defined by their falsity witnesses.

\begin{multicols}{2}
  \noindent
  \begin{align}
    \falv{\fun \ta \tb}
    & \eqdef \left\{ \cons \eb \ka \mid \eb \in \tru \ta, \ka \in \fal \tb \right\}
      \label{eqn:falv-fun} \\
    \tru{\fun \ta \tb}
    & \eqdef \orth{\falv{\fun \ta \tb}}
      \label{eqn:tru-fun} \\
    \fal{\fun \ta \tb}
    & \eqdef \biorth{\falv{\fun \ta \tb}} = \orth{\tru {\fun \ta \tb}}
      \label{eqn:fal-fun} \\
    \falv{\tunit}
    & \eqdef \varnothing
      \label{eqn:falv-unit} \\
    \tru{\tunit}
    & \eqdef \orth{\falv{\tunit}}
      \label{eqn:tru-unit} \\
    \fal{\tunit}
    & \eqdef \biorth{\falv{\tunit}} = \orth{\tru{\tunit}}
      \label{eqn:fal-unit} \\
    \truv \tna & \eqdef \tru \tna
    \label{eqn:truv-neg}
  \end{align}
  \begin{align}
    \truv{\sum \ta \tb}
    & \eqdef \left\{ \inj i \ea \mid \ea \in \tru{\ta_i} \right\}
      \label{eqn:truv-sum} \\
    \fal{\sum \ta \tb}
    & \eqdef \orth{\truv{\sum \ta \tb}}
      \label{eqn:fal-sum} \\
    \tru{\sum \ta \tb}
    & \eqdef \biorth{\truv{\sum \ta \tb}} = \orth{\fal{\sum \ta \tb}}
      \label{eqn:tru-sum} \\
    \truv{\tnat}
    & \eqdef \mathbb{N}
      \label{eqn:truv-nat} \\
    \fal{\tnat}
    & \eqdef \orth{\truv{\tnat}}
      \label{eqn:fal-nat} \\
    \tru{\tnat}
    & \eqdef \biorth{\truv{\tnat}} = \orth{\fal{\tnat}}
      \label{eqn:tru-nat} \\
    \falv \tpa & \eqdef \fal \tpa
    \label{eqn:falv-pos}
  \end{align}
\end{multicols}

As we remarked before, for any set~$S$ we have
$\biorth{(\orth{S})} \eqset \orth{S}$. As a result, truth and falsity
witnesses, defined as orthogonals or bi-orthogonals of value
witnesses, are stable by bi-orthogonality.

\XGS{This extension is necessary to type-check the implementation
  function-application cases (simplified and dependent) as we have
  written them.} The two definitions $\eqref{eqn:truv-neg}$ and
$\eqref{eqn:falv-pos}$ extend the notions of (truth and falsity) value
witnesses to both positive and negative types. They give the property
that we have $\tru \ta = \orth {\falv \ta}$ for any $\ta$, not just
negative types, and conversely $\fal \ta = \orth {\truv \ta}$. It will
be convenient when implementing value witnesses of these types as programs.

\section[Normalization by realizability for the simply-typed lambda-calculus]%
        {Normalization by realizability for the simply-typed $\lambda$-calculus}
\label{sec:nbr-variant1}
\label{subsec:realization-program}

Having set up the realizability framework in the previous Section, we
now cast the corresponding adequacy lemma as a dependently-typed
program (Section~\ref{sec:adequacy}). Its implementation will be the
subject of Section~\ref{subsec:dependent}.
Before getting into deep(endently-typed!) water, we simplify our model
(Section~\ref{subsec:simplification}) so as to focus solely on the
computational content of realizability, at the expense of its logical
content. In Section~\ref{sec:simply-typed-adequacy}, we implement a
realizability-based proof of weak normalization of the simply-typed
$\lambda$-calculus with arrows and sums in this simplified
setting.

The inductive definitions of syntactic terms, types and type
derivations that we gave in \figurename~\ref{fig:lambda} should be
understood as being inductive definitions in ambient type theory in
which we express the adequacy lemma: this program manipulates
syntactic representations of terms $\ea$, types $\ta$, and well-formed
derivations (dependently) typed by a judgment $\judgty \ca \ea \ta$.
To make this clear, we stick to the color conventions introduced in
Section~\ref{sec:colors}, which take all their meaning here as we
combine programming constructs in the object and the meta-langagues:
for example, we distinguish the $\lambda$-abstraction of the object
language $\lam \eva \ea$ (a piece of data representing
a simply-typed abstraction) and of the meta-language
$\semlam \semeva \semea$ (a program).

\subsection{Adequacy}
\label{sec:adequacy}

To distill the computational content of realizability, we implement
the adequacy lemma \emph{as a program} in a rather standard type
theory. Martin-Löf type theory with one predicative universe (denoted $\Type$) suffices for our
development. The universe is required to define the type of truth and
falsity witnesses by recursion on the syntax of object-language types.
In our Coq development, we also exploit the sort $\Prop$ as a means to
filter out computationally-irrelevant terms from the extracted
proof/program but we do not exploit its impredicativity.

Truth witnesses (and, respectively, falsity witnesses and the
pole) contain \emph{closed} programs, which have no free variables. To
interpret an open term $\judgty \ca \ea \ta$, one should first be passed
a substitution $\substa$ that maps the free variables of $\ea$ to closed
terms. Such a substitution is \emph{compatible with the typing environment
$\ca$} iff for each mapping $\eva \mathrel{\ob{\of}} \ta$ in $\ca$, the substitution
maps $\eva$ to a truth witness for $\ta$: $\substa(\eva) \in \tru{\ta}$. We use
$\env{\ca}$ to denote the set of substitutions satisfying this property.

In this setting, the adequacy lemma amounts to a function of the
following type:
\begin{mathpar}
  \rea : \binder{\forall}{\impl{\ca}\,\ea\,\impl{\ta}\,\impl{\substa}}{
    \semfun{\ \impl{\judgty \ca \ea \ta}}{
      \semfun{\ \substa \in \env{\ca}\ }{
        \ \obsubst{\transl{\ea}}{\substa} \in \tru{\ta}}}}
\end{mathpar}
As such, the result of this program is a proof of a set-membership
$\transl{\ea} \in \tru{\ta}$, whose computational status is unclear (and
is not primitively supported in most type theories). Our core idea is
to redefine these types in a proof-relevant way: we treat the
predicate $\_ \in \tru{\_} : \Terms \to \type \to \Type$ as a type of witnesses where the
inhabitants of $\transl{\ea} \in \tru{\ta}$ are the different syntactic
justifications that the \mumutilda-term $\transl{\ea}$ indeed
realizes $\ta$.
We respect a specific naming convention for arguments whose
dependent type is an inhabitation property: an hypothesis of type
$\ea \in \tru{\ta}$ is named $\semea$, $\semva$ for the
type $\va \in \truv{\ta}$ of \emph{\textbf{v}alues}, $\semka$
for the type $\ka \in \fal{\ta}$ of
co-terms, $\semsa$ for the type
$\sa \in \falv{\ta}$ of \emph{linear co-terms}\footnote{Linear
co-terms are to co-terms what values are to terms. They correspond
to ``evaluation'' contexts that force the opposite term to be evaluated
exactly once, justifying the adjective ``linear''.
\XGS{Attention, les co-termes $\pi$ ne vérifient pas
$\forall t, \mach t \pi \not\rto$, prendre par exemple $t \eqdef \bmu \kva {m'}$.}
The notation
$\pi$ comes from the similarity between the French pronunciations of
$\pi$ and stack: $\pi$ is pronounced [pi] whereas
stack, \textit{\textbf{pi}le} in French, is pronounced [pil].}, and
finally $\semsubsta$ for the type $\substa \in \env{\ca}$.
Names like $\eva$, $\ea$ and $\eb$, $\ka$, $\ta$ and $\tb$, $\ca$
are used to name respectively syntactic term variables, terms, co-terms, types, and
contexts.

The adequacy result is parametric on the pole, but we analyze its computational behavior
with a particular definition of the pole in mind, or rather of its proof-relevant membership
predicate
$\_ \in \Pole : \Machines \to \Type$ defined as
\begin{equation}\label{eqn:pole-Set}
 \ma \in \Pole
   \quad \eqdef \quad
   \{ \ob{\ma_n} \in \Values \mid \ma \rto \maidx 1 \rto \dots \rto \ma_n  \}
\end{equation}
where $\Values$ is the set of \emph{normal configurations}, \ie{},
configurations that cannot reduce further but are not stuck on an
error: they are of the form $\mach \eva \ka$ or $\mach \ea \kva$
but not, for example, $\mach {\inj i \ea} {\cons \ea \ka}$.
With this definition, an adequacy lemma proving $\ma \in \Pole$ for
a well-typed $\ma$ is exactly a normalization program -- the question
is \emph{how} it computes.
In this setting, the orthogonality predicates $\_ \in \orth{\tru{\_}} : \Coterms \to \type \to \Type$
and $\_ \in \orth{\fal{\_}} : \Terms \to \type \to \Type$ admit a straight-forward proof-relevant
definition as dependent function types:
\begin{align}
  \ka \in \orth{\tru{\ta}} &\eqdef
  \binder{\forall}{\{\ea : \Terms\}}%
    {\semfun{\ea \in \tru{\ta}}{\mach \ea \ka \in \Pole}}
    \label{eqn:orthT-SetD}
\\
  \ea \in \orth{\fal{\ta}} &\eqdef
  \binder{\forall}{\{\ka : \Coterms\}}%
    {\semfun{\ka \in \fal{\ta}}{\mach \ea \ka \in \Pole}}
    \label{eqn:orthF-SetD}
\end{align}

\paragraph{Notation} Following common usage, we write
$\binder{\forall}{\impl{\semeva : \ta}}{\tb}$ to declare an implicit
argument of type $\ta$, meaning that we omit those arguments when
writing calls to function of such type, as they can be non-ambiguously
deduced from the other arguments or the ambient environment. We write
$\semlam{\impl{a}}{t}$ if we want to explicitly bind an implicit
argument. On paper, this remains an informal notation meant to reduce the
syntactic overhead of dependent types: we shall not concern ourselves with the constraints of typability here, our Coq development having the final
say.

\subsection{Simplification}
\label{subsec:simplification}

In this section, we introduce a simpler notion of pole that is not
as orthodox, but results in simpler programs.
This pedagogical detour allows us to focus exclusively on the
computational content of the adequacy lemma. We leave it out to
Section~\ref{subsec:dependent} to present a correct-by-construction
adequacy lemma, where we are then bound to grapple with the
computational \emph{and} logical content in one fell swoop.
Here, we simply define
$\ma \in \Pole \eqdef \Values$: a witness that $\ma$ is well-behaved
is not a reduction sequence to a normal configuration $\man$, but only that
configuration $\man$. This is a weaker type, as it does not guarantee that the
value we get in return of the adequacy program is indeed obtained from
$\ma$ (it may be any other normal configuration). In
Section~\ref{subsec:dependent}, the more informative definition of
$\Pole$ is reinstated, and we show how the simple programs
we are going to write can be enriched to keep track of this reduction
sequence -- incidentally demonstrating that they were indeed returning
the correct value.

A pleasant side-effect of this simplification is that the set
membership types are not dependent anymore: the definition of $\ma \in
\Pole$ does not depend on $\ma$ ; definitions of $\va \in \truv{\ta}$, $\ea \in \tru{\ta}$, $\sa \in \falv{\tb}$, $\ka
\in \fal{\tb}$, and $\substa \in \env{\ca}$ do not mention $\va$, $\ea$, $\sa$, $\ka$, nor $\substa$ ; and orthogonality is defined with non-dependent types.

To make that simplification explicit, we rename those types
$\juPole$, $\jutruv{\ta}$, $\jutru{\ta}$, $\jufalv{\tb}$, $\jufal{\tb}$ and
$\juenv{\ca}$: they are \textbf{j}ustifications of membership
of some configuration (the type system does not track which one), term, co-term or context to the respective set.

Recall from Section~\ref{sub:witnesses} the definitions of value witnesses for the type formers:
\[
\begin{array}{r@{\:}l@{\quad}l}
  \falv {\fun \ta \tb} &\eqreas{eqn:falv-fun}
  \left\{ \cons \eb \ka \mid \eb \in \tru \ta, \ka \in \fal \tb \right\} \\
  \falv {\tunit} &\eqreas{eqn:falv-unit} \varnothing
\end{array}
\begin{array}{r@{\:}l@{\quad}l}
  \truv {\sum \ta \tb} &\eqreas{eqn:truv-sum}
  \left\{ \inj{i}{\ea} \mid \ea \in \tru{\taidx i} \right\}
  & (\ob{i} \in \{\ob{1}, \ob{2}\}) \\
  \truv {\tnat} &\eqreas{eqn:truv-nat} \mathbb{N} \\
\end{array}
\]
Filtering out their logical content, the Cartesian product inside $\falv{\fun \ta
  \tb}$ simplifies to a type of pairs (of the meta-language) whereas the union inside $\truv{\sum \ta \tb}$ becomes a sum type (of the meta-language):

\begin{minipage}{0.5\linewidth}
  \begin{align}
  \jufalv{\fun \ta \tb}
  &\eqdef
  \semprod {\jutru{\ta}} {\jufal{\tb}}
  \tag{\ref*{eqn:falv-fun}'} \label{eqn:falv-fun-J} \\
    \jufalv \tunit & \eqdef \varnothing
    \tag{\ref*{eqn:falv-unit}'}\label{eqn:falv-unit-J}
  \end{align}
\end{minipage}
\hfill
\begin{minipage}{0.45\linewidth}
  \begin{align}
  \jutruv{\sumfam \ta}
  &\eqdef
  \semsum {\jutru{\taidx 1}} {\jutru{\taidx 2}}
  \tag{\ref*{eqn:truv-sum}'}\label{eqn:truv-sum-J} \\
    \jutruv \tnat & \eqdef \mathbb{N}
    \tag{\ref*{eqn:truv-nat}'}\label{eqn:truv-nat-J}
  \end{align}
\end{minipage}
\smallskip

Similarly, the definition of orthogonality becomes:

\begin{minipage}{0.45\linewidth}
  \begin{align}
    \juPole & \eqdef \Values
      \tag{\ref*{eqn:pole-Set}'}\label{eqn:pole-J}
  \end{align}
\end{minipage}
\hfill
\begin{minipage}{0.45\linewidth}
  \begin{align}
    \juorth{T} & \eqdef \semfun T \juPole
     \tag{\ref*{eqn:orthT-SetD}'}\label{eqn:orth-J}
  \end{align}
\end{minipage}

\medskip \noindent
from which we derive the interpretation of types, through the usual
closure construction

\noindent
\begin{minipage}{.45\linewidth}
\begin{align}
  \jutru{\fun \ta \tb} &\eqdef \juorth{\jufalv{\fun \ta \tb}}
      \tag{\ref*{eqn:tru-fun}'}\label{eqn:tru-fun-J} \\
  \jufal{\fun \ta \tb} &\eqdef \jubiorth{\jufalv{\fun \ta \tb}}
      \tag{\ref*{eqn:fal-fun}'}\label{eqn:fal-fun-J} \\
  \jutru{\tunit} &\eqdef \juorth{\jufalv{\tunit}}
      \tag{\ref*{eqn:tru-unit}'}\label{eqn:tru-unit-J} \\
  \jufal{\tunit} &\eqdef \jubiorth{\jufalv{\tunit}}
      \tag{\ref*{eqn:fal-unit}'}\label{eqn:fal-unit-J} \\
  \jutruv \tna & \eqdef \jutru \tna
      \tag{\ref*{eqn:truv-neg}'}\label{eqn:truv-neg-J}
\end{align}
\end{minipage}
\hfill
\begin{minipage}{.5\linewidth}
\begin{align}
  \jufal{\sum \ta \tb}
      &\eqdef \juorth{\jutruv{\sum \ta \tb}}
      \tag{\ref*{eqn:fal-sum}'}\label{eqn:fal-sum-J} \\
  \jutru{\sum \ta \tb} &\eqdef \jubiorth{\jutruv{\sum \ta \tb}}
      \tag{\ref*{eqn:tru-sum}'}\label{eqn:tru-sum-J} \\
  \jufal{\tnat}
      &\eqdef \juorth{\jutruv{\tnat}}
      \tag{\ref*{eqn:fal-nat}'}\label{eqn:fal-nat-J} \\
  \jutru{\tnat} &\eqdef \jubiorth{\jutruv{\tnat}}
      \tag{\ref*{eqn:tru-nat}'}\label{eqn:tru-nat-J} \\
  \jufalv \tpa & \eqdef \jufal \tpa
      \tag{\ref*{eqn:falv-pos}'}\label{eqn:falv-pos-J}
\end{align}
\end{minipage}
\smallskip

Note that the types $\jutruv \ta$, $\jufalv \ta$
are \emph{not} to be understood as inductive types indexed by an
object-language type $\ta$, for they would contain fatal recursive
occurrences in negative positions:
\begin{mathpar}
\begin{array}{r@{\ =\ }l}
  \jufalv{\ob{\fun{(\fun \ta \tb)} \tc}}
  & \semprod{\jutru{\fun \ta \tb}}{\jufal{\tc}} \\
  & \semprod{\juorth{\jufalv{\fun \ta \tb}}}{\jufal{\tc}} \\
  & \semprod{(\semfun{\jufalv{\fun \ta \tb}}{\juPole})}{\jufal{\tc}} \\
\end{array}
\end{mathpar}
Instead, one should interpret $\jutruv{\ta}$ and $\jufalv{\ta}$ as
mutually recursive type-returning \emph{functions} defined by
structural induction over (the syntactic structure of) their input $\ta$.

\subsection{A simply-typed realizability program}
\label{sec:simply-typed-adequacy}

With this in place, we can write our simplified adequacy lemma, in
the non-dependent version:
\begin{mathpar}
  \rea : \binder{\forall}{\impl{\ca}\,\ea\,\impl{\ta}\,\impl{\substa}}{
      \ \impl{\judgty \ca \ea \ta}\ \to \juenv{\ca} \to \jutru{\ta}}
\end{mathpar}

We present the code case by case, and discuss why each is well-typed. On paper, we focus our attention to the type system presented in \figurename~\ref{fig:lambda}, ignoring integers and the unit.
To help the reader type-check the code, we write
$M^{\tru{\ob S}}$ (resp., $M^{\fal{\ob S}}$) if the expression $M$ has type
$\jutru{\ob S}$ (resp., $\jufal{\ob S}$).
For example, $\semea^{\tru{\ta}}$ can be interpreted as $\semea :
\jutru{\ta}$.

\paragraph{Variable}
\begin{mathpar}
  \begin{array}{rlllll}
    \rea & \eva
      & \impl{\raisebox{-.1em}{$\ob{\infer{(\eva \of \ta) \in \ca}{\judgty \ca \eva \ta}}$}}
      & \semsubsta^{\env{\ob{\ca, \eva : \ta}}}
      & \eqdef
      & \semsubsta(\eva)^{\tru{\ta}}
      \\
    \end{array}
\end{mathpar}
By hypothesis, the binding $\ob{x \of A}$ is in the context $\ob{\ca, \eva \of A}$, and we
have $\semsubsta : \juenv{\ob{\ca, \eva \of A}}$, so in particular
$\semsubsta(\eva) : \eva \in \tru{\ta}$ as expected.

\paragraph{Abstraction}
\label{sec:abstraction-simply-typed}
\begin{mathpar}
  \begin{array}{rlllll}
    \rea & (\ob{\lam{x}{t}})
      & \impl{\ob{
        \infer{\judgty{\ca, \eva \of \ta} \ea \tb}
              {\judgty \ca {\lam \eva \ea} {\fun \ta \tb}}
        }}
      & \semsubsta^{\env{\ca}}
      & \eqdef
      & \semlam{(\semeb^{\tru{\ta}}, \semka^{\fal{\tb}})}{
          \cut{\tb}
              {\rea~\ea~\semsubsta[\eva \mapsto \semeb]^{\env{\ob{\ca, \eva : \ta}}}}
              {\semka}
        }
      \\
    \end{array}
\end{mathpar}

We have structural information on the return type:
\[
  \jutru{\ob{\fun \ta \tb}}
  \eqreas{eqn:tru-fun-J}
  \juorth{\jufalv{\ob{\fun \ta \tb}}}
  \eqreas{eqn:orth-J}
  \semfun{\jufalv{\ob{\fun \ta \tb}}}{\juPole}
  \eqreas{eqn:falv-fun-J}
  \semfun{\jutru{\ta} \times \jufal{\tb}}{\juPole}
\]
It is thus natural to start with a $\lambda$-abstraction over
$\jutru{\ta} \times \jufal{\tb}$
, matching a pair of an
$\semeb : \jutru{\ta}$ and an $\semka : \jufal{\tb}$ to
return a $\juPole$.

The recursive call on $\ob{t : B}$ gives a witness of type
$\jutru{\tb}$; we combine it with a witness of type
$\jufal{\tb}$ to give a $\juPole$ by using an auxiliary
cut function which gives control to the active part of the term: either $\semea$ if $\ta$ is a positive type ($\tnat$, or a sum), or $\semka$ if $\ta$ is a negative type ($\tunit$, or a function).
In the simply-typed setting, this cut function boils down to
\begin{equation} 
  \label{eqn:cut-J}
  \begin{array}{lcll}
    \cut{\ta}{\_}{\_}
    & : &
    \jutru{\ta} \to
    \jufal{\ta} \to
    \juPole
    \\
    \cut{\tna}{\semea}{\semka}
    & \eqdef
    & \semapp{\semka}{\semea}
    & (\tna \in \{ \tunit, \fun \ta \tb \})
    \\
    \cut{\tpa}{\semea}{\semka}
    & \eqdef
    & \semapp{\semea}{\semka}
    & (\tpa \in \{ \tnat, \sum \ta \tb \})
    \\
  \end{array}
\end{equation}

\paragraph{Injections}

\begin{mathpar}
  \begin{array}{rlllllr}
    \rea & (\ob{\inj{i}{\ea}})
      & \impl{\ob{\infer
                         {\judgty \ca \ea {\taidx i}}
                         {\judgty \ca {\inj{i}{\ea}} {\sumfam{A}}}}}
      & \semsubsta
      & \eqdef
      & \biorth{(\inj{i}{(\rea~\ea~\semsubsta)})}
      & (i \in \{1, 2\})
      \\
  \end{array}
\end{mathpar}

The injection constructor applied to the recursive call has type
$\semsum{\jutru{\taidx 1}}{\jutru{\taidx 2}}
    \eqreas{eqn:truv-sum-J}
\jutruv{\ob{\sumfam{A}}}$.
This is not the expected type of
the $\rea$ function, which is
$\jutru{\ob{\sumfam{A}}}
    \eqreas{eqn:tru-sum-J}
  \jubiorth{\jutruv{\ob{\sumfam{A}}}}$.
The solution is to appeal to the bi-orthogonal inclusion $S \subseteq \biorth{S}$, which boils down to the following (pair of) simply-typed program, which turns value witnesses into witnesses, akin to a CPS translation:
\begin{equation}
  \label{eqn:biorth-J}
  \begin{array}{lcl}
  \val{\_}
  &:& \jufalv{\tna} \to \jufal{\tna} \\
  \val{(\semsa^{\falv{\tna}})}
  &\eqdef& \semlam{\semea^{\tru{\tna}}}{\semapp{\semea}{\semsa}} \\
  && (\tna \in \{\tunit, \fun \ta \tb\})
  \end{array}
\qquad
  \begin{array}{lcl}
  \val{\_}
  &:& \jutruv{\tpa} \to \jutru{\tpa} \\
  \val{(\semva^{\truv{\tpa}})}
  &\eqdef& \semlam{\semka^{\fal{\tpa}}}{\semapp{\semka}{\semva}} \\
 &&(\tpa \in \{ \tnat, \sum \ta \tb \}) \\
 \end{array}
\end{equation}

\paragraph{Application}
\begin{mathpar}
  \begin{array}{rlllll}
    \rea & (\ob{\app{\ea}{\eb}})
      & \impl{\ob{
        \infer{\judgty{\ca}{\ea}{\fun \ta \tb} \\ \judgty \ca  \eb \ta}
              {\judgty \ca {\app \ea \eb} \tb}
        }}
      & \semsubsta
      & \eqdef
      & \semlam{\semsa^{\falv{\tb}}}{
          \rea~\ea~\semsubsta
              ~\sempair{\rea~\eb~\semsubsta}{\val{\semsa}}
        } \\
  \end{array}
\end{mathpar}

Unlike the previous cases, we know nothing about the structure of the
expected return type $\jutru{\tb}$, so it seems unclear at
first how to proceed. In such situations, the trick is to use the
fact that
$\tru{\tb} = \orth{\falv{\tb}}$:
the expected type is therefore
$\juorth{\falv{\tb}}
    \eqreas{eqn:orth-J}
  \jufalv{\tb} \to \juPole$.

The function parameter $\semsa$ at type $\jufalv{\tb}$ is
injected into $\jufal{\tb}$ by the $\val{\_}$ function
defined in~\eqref{eqn:biorth-J}, and then paired with a recursive call on $\eb$ at
type $\jutru{\ta}$ to build
a
$\semprod{\jutru{\ta}}
         {\jufal{\tb}}
    \eqreas{eqn:falv-fun-J}
  \jufalv{\ob{\fun \ta \tb}}$.

This co-value justification can then be directly applied to the
recursive call on $\ea$, of type
$\jutru{\ob{\fun \ta \tb}}
    \eqreas{eqn:tru-fun-J}
  \juorth{\jufalv{\ob{\fun \ta \tb}}}
    \eqreas{eqn:orth-J}
  \jufalv{\ob{\fun \ta \tb}} \to \juPole
    \eqreas{eqn:falv-fun-J}
  \jutru{\ta} \times \jufal{\tb} \to \juPole$.

\paragraph{Case analysis}

\begin{mathpar}
    \rea ~ {\pdesumfam \ea \eva \eb} ~
      \impl{\obinfer{\judgty \ca \ea {\sumfam \ta}\\
                     \judgty {\ca, \evaidx i \of \taidx i} {\ebidx i} \tc}
                    {\judgty \ca {\desumfam \ea \eva \eb} \tc}}
      ~ \semsubsta
      ~ \eqdef

        \semlam{\semsa^{\falv{\tc}}}
            {\cut{\sumfam \ta}
                 {\rea~\ea~\semsubsta}
                 {\semlam{\semva^{\truv{\sumfam{\ta}}}}{
                     \semdesumlam{\semva}{{\semeaidx \idx}^{\tru{\taidx \idx}}}{
                       \rea~\ebidx \idx
                           ~\semsubsta[\evaidx \idx \mapsto \semeaidx \idx]
                           ~\semsa
                     }}}}
\end{mathpar}

Here we use one last trick: it is not in general possible to turn a
witness in $\tru{\ta}$ into a value witness in $\truv{\ta}$ (respectively, a
witness in $\fal{\tb}$ into a value witness in $\falv{\tb}$), but it is when the
return type of the whole expression is $\juPole$: we can
cut our $\semea : \jutru{\ta}$ with an abstraction
$\semlam{\semva^{\truv{\ta}}}{\dots}$ (thus providing a name $\semva$ at type
$\jutruv{\ta}$ for the rest of the expression) returning a
$\juPole$, using the cut function:
$
\cut{\ta}{\semea}{\semlam{\semva}{M^{\Pole}}} : \juPole
$. (If you know about monads, this is the \emph{bind} operation.)
%






\begin{figure}

\begin{mathpar}
  \begin{array}{rllllr}
    \rea & \ob{x^A}
      & \semsubsta
      & \eqdef
      & \semsubsta(\eva)
      \\
    \rea & \ob{(\lam{x^A}{t^B})}
      & \semsubsta
      & \eqdef
      & \semlam{(\semeb^{\tru{\ta}}, \semka^{\fal{\tb}})}{
          \cut{\tb}
              {\rea~\ea~\semsubsta[\eva \mapsto \semeb]}
              {\semka}
        }
      \\
    \rea & \ob{(\app{t^{\fun \ta \tb}}{u^{A}})}
      & \semsubsta
      & \eqdef
      & \semlam{\semsa^{\falv{\tb}}}{
          \rea~\ea~\semsubsta
              ~\sempair{\rea~\eb~\semsubsta}{\val{\semsa}}
        } \\
    \rea & \ob{(\inj{i}{t^{A_i}})}
      & \semsubsta
      & \eqdef
      & \biorth{(\seminj{i}{(\rea~\ea~\semsubsta)})} & (i \in \{1, 2\})
      \\
    \rea & \ob{\pdesumlam{\ea^{\sumfam{A}}}{\eva_\idx}{{\eb_\idx}^{C}}}
         & \semsubsta
         & \eqdef
         &
\end{array}

           \semlam{\semsa^{\falv{\tc}}}
            {\cut{\ob{\sumfam{\ta}}}
                 {\rea~\ea~\semsubsta}
                 {\semlam{\semva^{\truv{\sumfam{\ta}}}}{\semdesumlam{\semva}
                          {{\semeaidx \idx}^{\tru{\taidx \idx}}}
                          {\rea~\ob{\ebidx \idx}
                            ~\semsubsta[\evaidx \idx \mapsto \semeaidx \idx]
                          ~\semsa
                        }}}}
\end{mathpar}

\caption{Summary of the simply-typed setting}
\label{fig:variant1}
\end{figure}

\smallskip
This concludes our implementation of the adequacy lemma. For the sake
of completeness, its complete definition is summarized
in \figurename~\ref{fig:variant1}.  It should be clear at this point
that the computational behavior of this proof is, as expected, a
normalization function. The details of how it works, however, are
rather unclear to the non-specialist, in part due to the relative
complexity of the types involved: the call $\rea~\ea~\semsubsta$
returns a function type (of type $\semfun{\jufalv{\ta}}{\juPole}$), so it may not evaluate its argument
immediately. We further dwell on this question in
Section~\ref{sec:nbr-all}, in which we present a systematic study of
the various design choices available in the realizability proof.
As for now, we move from the simplified, non-dependent types to the
more informative dependent types. This calls for defining the
compilation from $\lambda$-terms to \mumutilda-machines, which has
been left unspecified so far.

\section{A dependently-typed realizability program}
\label{subsec:dependent}
\label{subsec:forma}

In the following, we undo the simplification presented in
Section~\ref{subsec:simplification}, by moving back to dependent
types. The best way to make sure that our program is type-correct is
to run a type-checker on it. To this end, this section has been
developed under the scrutiny of the Coq proof assistant. Working with
Coq offers two advantages. First, it has allowed us to interactively
explore the design space in a type-driven manner, using type-level
computation to assist this exploration. Second and unlike the
set-theoretic presentation of Section~\ref{sub:witnesses}, we can
precisely delineate the computational and propositional content of the
proof.


The interpretation of types rests, as usual, on a suitable definition
of truth and falsity value witnesses. We translate the set-theoretic
definitions~\eqref{eqn:falv-fun}, \eqref{eqn:falv-unit}, \eqref{eqn:truv-sum}
and \eqref{eqn:truv-nat} with the dependently-typed functions
$\_ \in \falv{\_} : \Coterms \to \type \to \Type$ and
$\_ \in \truv{\_} : \Terms \to \type \to \Type$ specified by
\begin{minipage}{0.45\linewidth}
\begin{align*}
\sa \in \falv{\fun \ta \tb}
  &\eqdef
  \filter{\sa}
         {\cons \eb \ka}
         {\eb \in \tru{\ta} \times \ka \in \fal{\tb}}
\tag{\ref*{eqn:falv-fun}''}\label{eqn:falv-fun-D}
\\
\sa \in \falv{\tunit}
  &\eqdef \bot
\tag{\ref*{eqn:falv-unit}''}\label{eqn:falv-unit-D}
\end{align*}
\end{minipage}
\hfill
\begin{minipage}{0.5\linewidth}
\begin{align*}
\va \in \truv{\ob{\sumfam \ta}}
  &\eqdef
    \filtertwo{\va}
              {\inj{1}{\eb}}
              {\eb \in \tru{A_1}}
              {\inj{2}{\eb}}
              {\eb \in \tru{A_2}}
\tag{\ref*{eqn:truv-sum}''}\label{eqn:truv-sum-D}
\\
\va \in \truv{\tnat}
  &\eqdef
    \filter{\va}
           {\enat n \in \mathbb{N}}
           {\top}
\tag{\ref*{eqn:truv-nat}''}\label{eqn:truv-nat-D}
\end{align*}
\end{minipage}
\XGS{Note that in the $\truv \tnat$ case here we assume that
  $\enat n \in \mathbb{N}$ belongs to $\Terms$, that is to
  $\mumutilda$ terms, while we ommitted the constants from the
  grammar. For now we can live with this small discrepancy.}

Applying the usual closure construction with the pole~\eqref{eqn:pole-Set} and orthogonality predicates \eqref{eqn:orthT-SetD} and \eqref{eqn:orthF-SetD}, we tie the knot by recursion over types and obtain a mutually
recursive pair of functions from syntactic types of the object
language to predicates in our meta-language: a truth witness
interpretation $\_ \in \tru{\_} : \Terms \to \type \to \Type$ building predicates over $\Terms$ and a falsity
witness interpretation $\_ \in \fal{\_} : \Coterms \to \type \to \Type$ building predicates over $\Coterms$. This
naturally extends to contexts.


At this stage, the adequacy lemma has grown into a dependently-typed program of signature
$$
  \rea : \binder{\forall}{
    \impl{\ca}\, \ea\,\impl{\ta}\,\impl{\substa}
  }{
    \ \impl{\ob{\Gamma \der \ty t  A}}\ \to
      \substa \in \env{\ca} \to
      \transl{\ea}[\substa] \in \tru{\ta}
  }
$$
where $\transl{\ea}$ compiles a term of the
object language into a \mumutilda-term in the meta-language that
witnesses the reduction. 
Rather than define this function upfront, we are going to
reverse-engineer it from the proof of the adequacy lemma. Indeed, the
typing constraints of the dependent version force us to exhibit a
reduction sequence, that is, a suitable compilation function.

The dependent version of the adequacy lemma has the same structure as the non-dependent one. Apart from enriching types, the main difference is that we justify the existence of the desired reduction sequence through computationally transparent annotations tracking the use of reduction rules.

\paragraph{Abstraction}
Consider the $\ob{\lambda}$-abstraction case of Section~\ref{sec:nbr-variant1} (p.~\pageref{sec:abstraction-simply-typed}):
\begin{mathpar}
  \begin{array}{rlllll}
    \rea & \ob{(\lam{x^A}{t^B})^{\fun \ta \tb}}
      & \semsubsta
      & \eqdef
      & \semlam{(\semeb, \semka)^{\falv{\ob{\fun \ta \tb}}}}{
          \cut{\tb}{\rea~\ea~\semsubsta[\eva \mapsto \semeb]}{\semka}
        }
      \\
    \end{array}
\end{mathpar}
We transform this code to a  dependently-typed case, by adding
annotations, leaving the code structure otherwise unchanged:
\begin{mathpar}
    \rea
    ~ \impl{\ca}
    ~ \ob{(\lam{x^A}{t^B})^{\fun \ta \tb}}
    ~ \impl{\substa}
    ~ (\semsubsta : \substa \in \env{\ca})
    ~ \eqdef

    \semlam{\impl{\ob{\cons{u}{e}} : \Coterms}}{
       \semlam{(\sempair{\semeb : \eb \in \tru{\ta}}
                  {\semka : \ob{e} \in \fal{\tb}}
             : \ob{\cons{u}{e}} \in \falv{\ob{\fun \ta \tb}})}{
       \ %
       \cut{\tb}{\rea~\ea~\semsubsta[\eva \mapsto \semeb]}{\semka}^\mathtt{lam}
     }}
\end{mathpar}

Note that the implicit abstraction is on inputs of the form
$\cons \eb \ka$ rather than a general input $\kaidx 0 : \Coterms$.
Indeed, it is followed by an abstraction over an element of type
$\kaidx 0 \in \falv {\fun \ta \tb}$, which by definition
\eqref{eqn:falv-fun-D} would be empty unless $\kaidx 0$ is of the form
$\cons \eb \ka$.

The cut function~(\ref{eqn:cut-J}) can be extended to a dependent type without any change in its implementation:
\begin{mathpar}
\cut{\ta}{\_}{\_} : \forall\,\impl{\ea\,\ob{e}},\,
   \ea \in \tru{\ta} \to
   \ob{e} \in \fal{\ta} \to
   \ob{\mach{t}{e}} \in \Pole
\end{mathpar}
but this is not enough to make the whole term type-check. Indeed, this
gives to the expression
$\cut{\tb}{\rea~\ea~\semsubsta[\eva \mapsto \semeb]}{\semka}$
the type $\mach{\obsubst{\transl{\ea}}{\substa, \eva \mapsto \eb}}{\ka} \in \Pole$.
But we just abstracted on $\cons \eb \ka \in \falv{\fun \ta \tb}$, in
order to form a complete term of type
$\ob{\obsubst{\transl{\lam{x}{t}}}{\substa}} \in \orth{\falv{\ob{\fun \ta \tb}}}$
so the expected type is in fact $\ob{\mach{\obsubst{\transl{\lam{x}{t}}}{\substa}}{\cons{u}{e}}} \in
\Pole$. To address this issue, we define an auxiliary function  $\_^\mathtt{lam}$  of type
\begin{mathpar}
  \_^\mathtt{lam} : \binder{\forall}{\impl{\ea\,\substa\,\eva\,\eb\,\ka}}{
    \mach {\obsubst {\transl{\ea}} {\substa, \eva \mapsto \eb}} \ka \in \Pole
    \to
    \mach {\obsubst {\transl{\lam{\eva}{\ea}}} \substa} {\cons{\eb}{\ka}} \in \Pole
  }
\end{mathpar}

This auxiliary function exactly corresponds to the fact that
the pole is closed under anti-reduction of machine configurations
(Figure~\ref{fig:mumutilda}). Indeed, the required reduction
\[
  \mach {\transl {\lam \eva \ea}} {\cons \eb \ka}
  \rto
  \mach {\obsubst \ea {\eva \mapsto \eb}} \ka
\]
is a consequence of the general rule for functions
\[
  \mach {\bmu {(\cons \eva \kva)} {\mach \ea \kva}} {\cons \eb \ka}
  \rto^{(\ref{red:mu-cons})}
  \mach {\obsubst \ea {\eva \mapsto \eb, \kva \mapsto \ka}} \ka
\]
plus the fact that translations of $\lambda$-terms $\transl \ea$
have no free co-term variable $\kva$
-- as can be proved by direct induction.
Note that if we had \emph{not} given you the definition of $\transl {\lam \eva \ea}$,
you would be \emph{asked} to pick $\bmu {(\cons \eva \kva)} {\mach \ea \kva}$
from the type-checking constraints that arise in the proof.

To define this function, recall our definition of the pole:
\begin{mathpar}
\ma \in \Pole
   \quad \eqdef \quad
   \{ \ma_n \in \Values \mid \ma \rto \maidx 1 \rto \dots \rto \ma_n \}
\end{mathpar}
The function $\_^\mathtt{lam}$ takes a normal form for the reduced
machine, and has to return a normal form for the not-yet-reduced
machine; this is the exact same normal form, with an extra reduction step:
\begin{mathpar}
  \begin{array}{lll}
    \_^\mathtt{lam} ~ {\impl{\ea\,\substa\,\eva\,\eb\,\ka}}
    & \eqdef &
      \lam {\bo{\{ \ob{\ma_n} \mid
      \mach {\obsubst{\transl{\ea}}{\eva \mapsto \eb}} \ka
      \rto \dots \rto \ob{\ma_n} \}}} {}
    \\
    & & \ \ %
    \bo{\{ \ob{\ma_n} \mid
      \mach {\transl{\lam{\eva}{\ea}}} {\cons{\eb}{\ka}}
      \rto \mach {\obsubst{\transl{\ea}}{\eva \mapsto \eb}} \ka
      \rto \dots \rto \ob{\ma_n} \}}
  \end{array}
\end{mathpar}

\paragraph{Application}
Similarly, the application case can be dependently typed as
\begin{mathpar}
    \rea
      ~ \ob{(\app{\ea^{\fun \ta \tb}}{\eb^{\ta}})}
      ~ \semsubsta
      ~ \eqdef
      ~
    \semlam{\impl{\sa : \Coterms}}{
      \semlam{\semsa : \sa \in \falv{\tb}}{
        {\left(\semapp{\rea~\ea~\semsubsta}
                      {{\sempair{\rea~\eb~\semsubsta}
                                {\val{\semsa}}}}\right)}^\mathtt{app}
    }}
\end{mathpar}

The function $\val{\_}$, which we had defined at the type
$\jufalv{\ta} \to \jufal{\ta}$, can be given -- without
changing its implementation -- the more precise type
$\binder{\forall}{\impl{\sa}}{\sa \in \falv{\ta} \to \sa \in \fal{\ta}}$
. This means that
$\semapp{\rea~\ea~\semsubsta}
        {\sempair{\rea~\eb~\semsubsta}{\val{\semsa}}}
$
has type
$
\mach
   {\obsubst{\transl{\ea}}{\substa}}
   {\cons{\obsubst{\transl{\eb}}{\substa}}{\sa}}
 \in \Pole$.
For
the whole term to be well-typed, the auxiliary function $\_^\mathtt{app}$
needs to have the type
\[
\binder{\forall}{\impl{\ea\,\eb\,\substa\,\sa}}{
  \mach{\obsubst {\transl \ea} \substa}
       {\cons{\obsubst {\transl \eb} \substa}{\sa}}} \in \Pole
  \to
  \ob{\mach{\obsubst {\transl{\app{\ea}{\eb}}} \substa}{\sa}} \in \Pole
\]

Again, this is exactly an anti-reduction rule; not as a coincidence, but
as a direct result of the translation of application in $\mumutilda$:
$\transl {\app \ea \eb}
 \eqdef
 \bmu \kva {\mach {\transl \ea} {\cons {\transl \eb} \kva}}$.
The definition is similar to $\_^\mathtt{lam}$.

\paragraph{Case analysis}
The dependently-typed treatment of case analysis involves three decorations,
${\_}^{\mathtt{inj}_1}$, ${\_}^{\mathtt{inj}_2}$
 and ${\_}^{\mathtt{case}}$:
\[
\begin{array}{l}
    \rea ~ \pddesumlam{\ea^{\sumfam{A}}}{\eva_\idx}{{\eb_\idx}^{C}}
         ~ \semsubsta
         ~ \eqdef
  \\ \quad
    \semlam{\impl{\sa : \Coterms}}{
      \semlam{\semsa : \sa \in \falv{\tc}}{
  }}
  \\ \qquad
  \newcommand{\arm}[1]{%
    {(\rea~\ebidx #1
          ~\semsubsta[\evaidx #1 \mapsto \semeaidx #1]
          ~\semsa)}^{\mathtt{inj}_{#1}}}
   \cut{\sumfam{\ta}}
       {\rea~\ea~\semsubsta}
       {\semlam{\impl{\va}}{\semlam{\semva : \va \in \truv{\sumfam \ta}}{
        \sembrawcase{\va, \semva}
          {\inj 1 {\eaidx 1}, \seminj 1 {\semeaidx 1}}
          {\arm 1}
          {\inj 2 {\eaidx 2}, \seminj 2 {\semeaidx 2}}
          {\arm 2}}}}^{\mathtt{case}}
\end{array}
\]

We can illustrate how the typing constraints in this term
lets us deduce/recover exactly the translation we gave
in \figurename~\ref{fig:mumutilda-cbn-compilation}
of sum elimination into $\mumutilda$.

For readability, we use here the more compact family
notation $\pfamdesumfam \ea \iot i \eva \eb$ for
$\pddesumfam \ea \eva \eb$.

From the outer typing constraint, we know that the result type of $\_^\mathtt{case}$
must be of type
$\mach {\obsubst {\transl {\famdesumfam \ea \iot i \eva \eb}} \substa} \sa
 \in \Pole$.
Its argument is of the form $\cut {\sumfam \ta} {\rea~\ea~\substa} \dots$,
so $\_^\mathtt{case}$ must perform an anti-reduction argument on a reduction step
of the form
\[
  \mach {\transl {\famdesumfam \ea \iot i \eva \eb}} \sa
  \rto
  \mach {\transl \ea} \dots
\]

Let us write $(\transl {\famdesumfam \square \iot i \eva \eb}; \sa)$
for the co-term in the right-hand-side of the reduction rule above.
From the desired reduction step above, we know that we need the translation
\[
  \transl {\famdesumfam \ea \iot i \eva \eb}
  \eqdef
  \bmu \kva {\mach \ea {\bo{(}\transl {\famdesumfam \square \iot i \eva \eb}; \kva\bo{)}}}
\]
where $(\transl {\famdesumfam \square \iot i \eva \eb}; \kva)$ remains to be determined.

As a second step, we look at the typing constraints
for the inner auxiliary functions
$\_^{\mathtt{inj}_i}$, in the terms
$
{(\rea~\ebidx i
  ~\semsubsta[\evaidx i \mapsto \semeaidx i]
~\semsa)}^{\mathtt{inj}_i}
$.
We must prove that the co-term
\(\transl {\famdesumfam \square \iot i \eva \eb}; \sa\)
is in $\fal{\sumfam \ta}$, that is, that putting it against any \emph{constructor}
$\inj i {\eaidx i}$ goes in the pole. The result type of $\_^{\mathtt{inj}_i}$
is thus of the form
\(
  \mach {\inj i {\eaidx i}} {\transl {\famdesumfam \square \iot i \eva \eb}; \sa}
  \in \Pole
\)

Finally, we know the input type of $\_^{\mathtt{inj}_i}$, which is of the form
$
\mach {\obsubst {\ebidx i} {\by {\evaidx i} {\eaidx i}}} \sa \in \Pole
$
so we know that this function must correspond to an anti-reduction argument
for the reduction
\[
\mach {\inj i {\eaidx i}} {\transl {\famdesumfam \square \iot i \eva \eb}; \sa}
\rto
\mach {\obsubst {\ebidx i} {\by {\evaidx i} {\eaidx i}}} \sa
\]
which forces us to pose:\footnote{We took the idea of seeing
  introduction of a new destructor as the
  resolutions of some equations from \citet*{munch-phd}.}
\[
  (\transl {\famdesum \square \iot i \eva \eb}; \sa)
  \eqdef
  \bmutsumlam {\evaidx \idx} {\mach {\ebidx \idx} \sa}
\]
which gives, indeed, the same definition as
\figurename~\ref{fig:mumutilda-cbn-compilation}:
\begin{align*}
  \transl {\famdesumfam \ea \iot i \eva \eb}
  & =
  \mach \ea {\transl {\famdesumfam \square \iot i \eva \eb}; \sa} \\
  & =
  \mach \ea {\bmutsumlam {\evaidx \idx} {\mach {\ebidx \idx} \sa}}
\end{align*}

\paragraph{Sum injections, integers and unit type}
Injections, being computationally inert, are straightforward to
handle: an injection in the object language compiles into
a \mumutilda{} injection and the corresponding realizability program
does not require decoration since it does not need to track
a reduction of the configuration.
Integers and the unit type are also computationally inert and even simpler since there is no subterm to normalize: the compilation is simply the identity and the correctness proof is the diagonal.

\paragraph{Wrapping up} We would like to point out that all of the compilation
rules from the $\lambda$-calculus to our abstract machines can be \emph{recovered}
by studying the typing constraints in our proof.

It means that this compilation scheme is, in some sense, solely
determined by the type of the adequacy lemma. In particular, we
recovered the fact that the application transition needs to be
specified only for linear co-terms $\sa$ -- which, in our
setting, correspond to those co-terms that belong not only to
$\fal{\fun \ta \tb}$, but also to $\falv{\fun \ta \tb}$.

\paragraph{Mechanized formalization in Coq}
Our formalization covers more type constructors than we had space to
present here, handling both (positive) natural numbers and positive
products\footnote{In intuitionistic logic, there is little difference
between the positive and the negative product, so a product is always
suspect of being negative type in hiding.  The elimination form of
positive products matches strictly on both arguments whereas negative
products are eliminated by two projections. The difference is glaring
in the presence of side-effects.}. While positive products are handled
along the same line as sums, interpreting the elimination form for
natural numbers lead us naturally to introduce a standalone lemma that
extracts (somewhat to our disbelief) to an iterator over integers.
This phenomena by which we coincidentally extract to the right
program is noteworthy. In fact, there is little to no choice in
writing the proof, once we get familiarized with the mechanics of our
dependent types.

Our implementation takes advantage of distinctive form of the pole: it
consists in a computationally relevant object (the normal form
$\ob{\ma_n} \in \Values$) together with a proposition stating that the
normal form is indeed related to the original program. By threading
this invariant through our model construction by means of subset types, the dependently-typed
adequacy lemma -- proved using tactics and concluded with the
vernacular \texttt{Defined.} -- admits a straightforward extraction:
the second component of the subset types is erased, leaving only the
normal forms. The resulting OCaml program is almost literally the
program written in Section~\ref{sec:nbr-variant1}, provided that we
ignore the remains of dependent types that extraction fails to
eliminate, such as terms, stacks, and substitutions over them.  In
particular, we uncovered during this process that the reduction
sequence is justified by appeal to computationally irrelevant
annotations translating the anti-reduction closure of the pole, a
reassuring and satisfying result.  An informal dead-code elimination
argument suggests that we could safely remove the remains of dependent
types but convincing Coq that it is safe to do so remains
future work.

To illustrate the fact that the computational content of the adequacy
lemma does not depend on a particular choice of the pole, the proof
itself is parametrized by a Coq module specifying the computational
and logical content of a pole.  We provide two
instantiations of pole to support extraction: full machine configurations and natural
numbers (suited for configurations that normalize to integers). In this
way, we are able to run several examples of derivations in the
simply-typed $\lambda$-calculus through the adequacy lemma and extract
the corresponding normalized configuration or integer. The
supplementary material provides information on how to run
with these proofs/programs.

\section{Alternative realizations}
\label{sec:nbr-all}

We made two arbitrary choices when we decided to define
$\falv {\fun \ta \tb} \eqdef
    \{ \cons \eb \ka \mid \eb \in \tru \ta, \ka \in \fal \tb \}$
. There
are in fact four possibilities with the same structure:
\begin{mathpar}
  \variant{1}\quad \{ \cons \eb \ka \mid \eb \in \tru \ta, \ka \in \fal \tb \}

  \variant{2}\quad \{ \cons \eb \ka \mid \eb \in \tru \ta, \ka \in \falv \tb \}

  \variant{3}\quad \{ \cons \eb \ka \mid \eb \in \truv \ta, \ka \in \fal \tb \}

  \variant{4}\quad \{ \cons \eb \ka \mid \eb \in \truv \ta, \ka \in \falv \tb \}
\end{mathpar}

In our Coq formalization, we study the four possibilities and see that
they all work but each of them gives a slightly different realization
program. For conciseness, we only treat Variant~\variant{3} in this
section and omit sum types, integers and the unit type;
they can be handled separately (and sums would also give rise to four different
choices), as we did in our formalization.

We do not repeat the reverse engineering process of the previous
section, but directly jump to the conclusion of what translation to
(arrow-related) co-terms they suggest, and which reduction strategy
they must follow. We observe that the first two variants correspond to
a call-by-name evaluation strategy, while the two latter correspond to
call-by-value; forcing us, in particular, to use the $\bmut\eva \ma$
construction~\citep*{curien-herbelin} in the compilation to
$\mumutilda$-terms.

In order to distinguish the compilation functions of the various variants, we mark them with their variant number as exponent: $\transl[i]{\ea}$ for variant~\variant{i}.

\paragraph{Variant~\variant{1} $\{ \cons \eb \ka \mid \eb \in \tru \ta, \ka \in \fal \tb \}$}
This interpretation corresponds to the realization program presented
in Section~\ref{subsec:realization-program}, recalled here for ease of
comparison:
\begin{mathpar}
  \begin{array}{rllll}
    \rea & \eva^{\ta}
      & \semsubsta
      & \eqdef
      & \semsubsta(\eva)
      \\
    \rea & {(\lam {\eva^{\ta}} {\ea^{\tb}})}^{\fun \ta \tb}
      & \semsubsta
      & \eqdef
      & \semlam{(\semeb^{\tru{\ta}}, \semka^{\fal{\tb}})}{
          \cut{\tb}{\rea~\ea~\semsubsta[\eva \mapsto \semeb]}{\semka}
        }
      \\
    \rea & \ob{(\app {\ea^{\fun \ta \tb}} {\eb^{\ta}})}
      & \semsubsta
      & \eqdef
      & \semlam{\semsa^{\falv{\tb}}}{
          \rea~\ea~\semsubsta
              ~\sempair{\rea~\eb~\semsubsta}{\val{\semsa}}
        } \\
    \end{array}

\begin{array}{rl}
\transl[1]{\eva}
 &\eqdef \eva \\
\transl[1]{\lam \eva \ea}
 &\eqdef \bmu{(\cons \eva \kva)}{\mach{\transl[1]{\ea}}{\kva}} \\
\transl[1]{\app \ea \eb}
 &\eqdef \bmu{\kva}{\mach{\transl[1]{\ea}}{\cons{\transl[1]{\eb}}{\kva}}} \\
\end{array}
\end{mathpar}

\if 0
\paragraph{Variant~\variant{2} $\app {\falv{\fun \ta \tb}} {(\cons \ea \ka)} \eqdef \app {\tru{\ta}} \ea \times \app {\falv{\tb}} \ka$}
This second variant asks for the falsity witnesses of $\tb$ to be
values, thus being stricter on the evaluation context but still
performing a lazy application.

\begin{mathpar}
  \begin{array}{rllll}
    \rea & \ob{\ea^{\ta}}
      & \semsubsta
      & \eqdef
      & \semsubsta(\eva)
      \\
    \rea & (\lam{\eva^{\ta}}{\ea^{\tb}})^{\fun \ta \tb}
      & \semsubsta
      & \eqdef
      & \semlam{(\semeb^{\tru{\ta}}, \semsa^{\falv{\tb}})}{
          \semapp{\rea~\ea~\semsubsta[\eva \mapsto \semeb]}{\semsa}
        }
      \\
    \rea & (\app{\ea^{\fun \ta \tb}}{\eb^{\ta}})
      & \semsubsta
      & \eqdef
      & \semlam{\semsa^{\falv{\tb}}}{
          \rea~\ea~\semsubsta
              ~\sempair{\rea~\eb~\semsubsta}{\semsa}
        } \\
    \end{array}
\end{mathpar}

We should remark that there would be a different yet natural way to write
the $\lambda$-case, closer to the $\tru{\ta} \times \fal{\tb}$ version:
$\semlam{(\semeb^{\tru{\ta}}, \semsa^{\falv{\tb}})}{
          \cut{\tb}{\rea~\ea~\semsubsta[\eva \mapsto \semeb]}{\val{\semsa}}
        }
$
The good news is that those two versions are equivalent: simple
$\beta\eta$ equational reasoning shows that
$\cut{\ta}{\semea}{\val{\semsa}}$ is equivalent to
$\semapp{\semea}{\semsa}$, and
conversely $\cut{\ta}{\val{\semva}}{\semka}$ is equivalent to
$\semapp{\semka}{\semva}$. \XPE{Prove it here.}

The fact that a result can be computed in several, equivalent ways is
not specific to this case. This phenomenon occurs in the three other
variants as well. For example, in the application case of
Variant~\variant{1} above, it is possible to start with
$\semlam{\semka^{\fal{\tb}}}{\_}$ instead of
$\semlam{\semsa^{\falv{\tb}}}{\_}$, and then take the resulting term of
type $\jufal{\tb} \to \juPole$, that is
$\juorth{\jufal{\tb}}$, and coerce it into a $\jutru{\tb}$ --
this is the equality for positive $B$, but not for negatives. Again,
the resulting program is equivalent to the version we gave. We do
not mention each such situation (but we checked that they preserve
equivalence), and systematically use the presentation that results in
the simpler code.

\paragraph{\variant{2} $\falv{\fun \ta \tb} \eqdef \tru{\ta} \times \falv{\tb}$}

\begin{mathpar}
  \ka ::= \cons \eb \sa \mid \dots

  \mach{\lam \eva \ea}{\cons \eb \sa}
  \rto
  \mach{\obsubst{\ea}{\eva \mapsto \eb}}{\sa}

  \mach{\app \ea \eb}{\sa}
  \rto
  \ob{\mach{\ea}{\cons{\eb}{\sa}}}
\end{mathpar}

\fi

\paragraph{Variant~\variant{3} $\{ \cons \eb \ka \mid \eb \in \truv \ta, \ka \in \fal \tb \}$}

A notable difference in this variant -- and Variant~\variant{4} -- is
that we can restrict the typing of our environment witnesses to store
value witnesses: rather than $\semsubsta : \substa \in \env{\ca}$, we
have $\semsubsta : \substa \in \envv{\ca}$, that is, for each binding
$\ob{\ea \of \ta}$ in $\ca$, we assume $\semsubsta(\eva) : \substa(\eva) \in \truv{\ta}$.
The function would be typable with a weaker argument
$\semsubsta : \substa \in \env{\ca}$ (and would have an equivalent
behavior when starting from the empty environment), but that would
make the type less precise about the dynamics of evaluation.

\abovedisplayskip=1em
\belowdisplayskip=1em

\begin{mathpar}
  \begin{array}{rllll}
    \rea & \eva^{\ta}
      & \semsubsta^{\envv{\ca}}
      & \eqdef
      & \val{(\semsubsta(\eva))}
      \\
    \rea & (\lam {\eva^{\ta}} {\ea^{\tb}})^{\fun \ta \tb}
      & \semsubsta
      & \eqdef
      & \semlam{(\semva^{\truv{\ta}}, \semka^{\fal{\tb}})}{
          \cut{\tb}{\rea~\ea~\semsubsta[\eva \mapsto \semva]}{\semka}
        }
      \\
    \rea & \ob{(\app{\ea^{\fun \ta \tb}}{\eb^{\ta}})}
      & \semsubsta
      & \eqdef
      & \semlam{\semsa^{\falv{\tb}}}{
          \cut{\ta}{\rea~\eb~\semsubsta}{\semlam{\semva_u^{\truv{\ta}}}{
              \cut{\fun \ta \tb}{\rea~\ea~\semsubsta}
                  {\sempair{\semva_u}{\val{\semsa}}}
          }}
        } \\
    \end{array}
\end{mathpar}

Had we kept $\semsubsta : \substa \in \env{\ca}$, we would
have only $\semsubsta(\eva)$ in the variable case, but
$\semsubsta[\eva \mapsto \val{\semva}]$ in the abstraction case,
which is equivalent if we do not consider other connectives pushing in
the environment.

It is interesting to compare the application case with the one of
Variant~\variant{1}:
\begin{mathpar}
\semlam{\semsa^{\falv{\tb}}}{
          \rea~\ea~\semsubsta
              ~\sempair{\rea~\eb~\semsubsta}{\val{\semsa}}
        }
\end{mathpar}
The relation between Variant~\variant{1} and Variant~\variant{3} seems
to be an inlining of $\rea~\eb~\semsubsta$, yet the shapes of the
terms hint at a distinction between call-by-name and call-by-value
evaluation. These two versions are equivalent whenever $\ta$ is
negative (a function), as in that case the cut
$\cut{\ta}{\rea~\eb~\semsubsta}{\semlam{\semva_u}{\dots}}$
applies its left-hand side as an argument to its right-hand side,
giving exactly the definition of Variant~\variant{1} after
$\beta$-reduction. However, once again, considering the dependent
version forces us to clarify the evaluation dynamics:
\[
\begin{array}{@{}l}
    \rea
      ~ \impl{\ca}
      ~ \ob{(\app{t^{\fun \ta \tb}}{u^{A}})}
      ~ \impl{\substa}
      ~ (\semsubsta : \substa \in \envv{\ca})
      ~ \eqdef
      \semlam{\impl{\ob{\pi} : \Coterms}}
     {\semlam{\semsa : \ob{\pi} \in \falv{\tb}}{}}
\\
      \qquad
      \cut{\ta}{\rea~\eb~\semsubsta}{
        \semlam{\impl{\ob{v_u} : \Terms}}{\semlam{(\semva_u : \ob{v_u} \in {\truv{\ta}})}{
          \;
          \cut{\fun \ta \tb}{\rea~\ea~\semsubsta}
              {\sempair{\semva_u}{\val{\semsa}}}^\mathtt{app-body}
        }}
      }^\mathtt{app-arg}
\end{array}
\]

The annotation $\_^\mathtt{app-body}$ corresponds to a reduction to
the configuration $\ob{\mach{\obsubst{\transl[3]{\ea}}{\substa}}{\cons{v_u}{\pi}}}$. Its return type --
which determines the configuration which should reduce to this one -- is
constrained not by the expected return type of the $\rea$ function as
in previous examples, but by the typing of the outer cut function
which expects a $\ob{\obsubst{\transl[3]{\eb}}{\substa}} \in \tru{\ta}$ on the left,
and thus on the right an $\ob{e} \in \fal{\ta}$ for some co-term $\ob{e}$.
By definition of $\ob{e} \in \fal{\ta}$, the return type of
$\_^{\mathtt{app-body}}$ should thus be
of the form $\ob{\mach{v_u}{e}} \in \Pole$.

What could be the result of $\transl[3]{\eb}$ that would respect the reduction
equation
$\ob{\mach{v_u}{\transl[3]{\eb}}} \rto \ob{\mach{\obsubst{\transl[3]{\eb}}{\substa}}{\cons {v_u} \pi}}$
so that $\_^{\mathtt{app-body}}$ amounts to stability
under anti-reduction?
Using the $\ob{\mutilda}$
binder \citep*{curien-herbelin} from $\mumutilda$, we define
$\transl[3]{\eb}
    \eqdef
        \ob{\binder{\mutilda}{v_x}{\mach{\obsubst{t}{\substa}}{\cons {v_x} \pi}}}$
which is subject to the desired reduction:
$\ob{\mach{v}{\binder{\mutilda}{x}{\ma}}} \rto^{(\ref{red:mut})} \ob{\obsubst{\ma}{\eva \mapsto \va}}$.

From here, the input and output type of $\_^{\mathtt{app-arg}}$ are
fully determined, allowing to state a final equation:
$\ob{\mach{\transl[3]{\app \ea \eb}}{\pi}}
\rto
\ob{\mach{\transl[3]{\eb}}{\bmut{v_u}{\mach{\transl[3]{\ea}}{\cons{v_u}{\pi}}}}}$
that is solved by taking
$\transl[3]{\app \ea \eb}
    \eqdef
        \bmu{\kva}
            {\mach{\transl[3]{\eb}}
                  {\bmut{v_u}{\mach{\transl[3]{\ea}}
                                              {\cons{v_u}{\kva}}}}}$
. Summing up:
\begin{mathpar}
\begin{array}{rl}
\transl[3]{\eva}
 &\eqdef \eva \\
\transl[3]{\lam \eva \ea}
 &\eqdef \bmu{(\cons \eva \kva)}{\mach{\transl[3]{\ea}}{\kva}} \\
\transl[3]{\app \ea \eb} &\eqdef
        \bmu{\kva}
            {\mach{\transl[3]{\eb}}
                  {\bmut{v_u}{\mach{\transl[3]{\ea}}
                                              {\cons{v_u}{\kva}}}}}
\end{array}
\end{mathpar}
which corresponds to the compilation scheme from the call-by-value $\lambda$-calculus
into \mumutilda.

\if 0
\paragraph{Variant~\variant{4} $\app {\falv{\fun \ta \tb}} {(\cons \ea \ka)} \eqdef \app {\truv{\ta}} \ea \times \app {\falv{\tb}} \ka$}
This variant combines the strict calling strategy of
Variant~\variant{3} with the strict context of Variant~\variant{2}.

\begin{mathpar}
  \begin{array}{rllll}
    \rea & \ob{x^A}
      & \semsubsta^{\envv{\ca}}
      & \eqdef
      & \val{(\semsubsta(\eva))}
      \\
    \rea & \ob{(\lam{x^A}{t^B})^{\fun \ta \tb}}
      & \semsubsta
      & \eqdef
      & \semlam{(\semva^{\truv{\ta}}, \semsa^{\falv{\tb}})}{
          \semapp{\rea~\ea~\semsubsta[\eva \mapsto \semva]}{\semsa}
        }
      \\
    \rea & \ob{(\app{t^{\fun \ta \tb}}{u^{A}})}
      & \semsubsta
      & \eqdef
      & \semlam{\semsa^{\falv{\tb}}}{
          \cut{\ta}{\rea~\eb~\semsubsta}{\semlam{\semva_u^{\truv{\ta}}}{
              \rea~\ea~\semsubsta
                  ~\sempair{\semva_u}{\semsa}
          }}
        } \\
    \end{array}
\end{mathpar}

In this variant, like in Variant~\variant{3}, it seems rather
tempting to strengthen the return type of the
$\rea~\ob{t^A}~\semsubsta$ program to return not
a $\ea \in \tru{\ta}$, but $\ea \in \truv{\ta}$. In
particular, this would allow to remove the cut
$\cut{\ta}{\rea~\eb~\semsubsta}{\lam{\semva_u}{\dots}}$ from
the application case, instead directly writing
$\sempair{\rea~\eb~\semsubsta}{\semsa}^{\falv{\fun \ta \tb}}$. This
strengthening, which seemed to coincide with the strengthening of the
environment substitution from $\substa \in \env{\ca}$ to
$\substa \in \envv{\ca}$, does not lead to a well-typed
program. The problem is with the application case
$\ob{\app{t^{\fun \ta \tb}}{u}}$: while we can return a $\jutru{\tb}$ by
using the trick that $\tru{\tb} = \orth{\falv{\tb}}$, there is no way
to build a $\jutruv{\tb}$ without any knowledge of $\tb$.
This strengthening would be possible if our language was only composed of negative types,
exploiting the fact that $\jutruv{\tna} = \jutru{\tna}
= \juorth{\jufal{{\tna}}}$ by \eqref{eqn:truvN} and \eqref{eqn:tru-fun}.

\paragraph{\variant{4} $\falv{\fun \ta \tb} \eqdef \truv{\ta} \times \falv{\tb}$}

\begin{mathpar}
  \ob{e} ::= \ob{\cons{v}{\pi}} \mid \ob{\binder{\mutilda}{x}{\ma}} \mid \dots

  \ob{\mach{\lam{x}{t}}{\cons{v}{\pi}}}
  \rto
  \ob{\mach{\obsubst{t}{x \mapsto v}}{\pi}}

  \ob{\mach{v}{\binder{\mutilda}{x}{\ma}}}
  \rto
  \ob{\obsubst{\ma}{x \mapsto v}}

  \ob{\mach{\app{t}{u}}{\pi}}
  \rto
  \ob{\mach{u}{\binder{\mutilda}{x}{\mach{t}{\cons{x}{\pi}}}}}
\end{mathpar}
\fi

\paragraph{A closer look at reductions}

Looking at these two variants, the code itself may not be clear enough
to infer their evaluation order. However, their compilation
to \mumutilda{} machines, that were imposed on us by the dependent
typing requirement, are deafeningly explicit. When interpreting the
function type $\falv{\fun \ta \tb}$, requiring truth \emph{value}
witnesses for $\ta$ gives us call-by-value transitions
(Variant~\variant{3}), while just requiring arbitrary witnesses gives us
call-by-name transitions (Variant~\variant{1}).


Finally, an interesting point to note is that we have explored the
design space of the semantics of the arrow connective independently
from other aspects of our language, notably its sum type. Any of these
choices for $\falv{\fun \ta \tb}$ may be combined with any choice
for $\truv{\sum \ta \tb}$ (basically, lazy or strict sums; the
Coq formalization has a variant with lazy sums) to form a complete
proof.\footnote{With the exception that it is only possible to
  strengthen the type of environments from $\env{\ca}$ to
  $\envv{\ca}$ if all computation rules performing
  substitution only substitute values; that is a global change.} We
see this as yet another manifestation of the claimed modularity of
realizability and logical-relation approaches, which allows studying
connectives independently from each other -- once a notion of
computation powerful enough to support them all has been fixed.

\section{Related work}

\paragraph{Classical realizability}

Realizability has been used by logicians to enrich a logic with new
axioms and constructively justify such extensions. To do so, one
extends the language of machines, terms and co-terms with new
constructs and reduction rules, in order to give a computational
meaning to interesting logical properties which had no witnesses in
the base language. For example, extending the base calculus with a
suitably-defined $\mathtt{callcc}$ construct provides a truth witness
for Pierce's law (and thus, provides a computational justification to
the addition of the principle of excluded middle in the source
logic). When possible, this gives us a deeper (and often surprising)
understanding of logical axioms. This paper takes the opposite
approach, starting from a logic (embodied by the simply-typed
$\lambda$-calculus) and an interpretation of its typed values to
re-discover the computational content of its realizers.

\paragraph{Intuitionistic realizability}

\citet*{oliva-streicher} factor the use of classical realizability for
a negative fragment of second-order logic as the composition of
intuitionistic realizability after a CPS translation. We know that
translating $\lambda$-terms into abstract machines then back
corresponds to a CPS transform. In this work, the authors remark that
classical realizability corresponds to translating the abstract
machines back into (CPS-forms of) $\lambda$-terms, then doing intuitionistic
realizability on the resulting terms. The bi-orthogonal closure arises
from the intuitionistic realizability interpretation of the
double-negation translation of intuitionistic formulas (the types of
terms in the original machine).

From the point of view of a user wishing to understand how
realizability techniques prove normalization of a given well-typed
term, there are thus two choices: either see this term as part of an
abstract machine language, and look at the classical realizability
proof on top of it, or look at the CPS translation of this term and
look at the intuitionistic realizability proof on top of it. Oliva and
Streicher show that those two interpretations agree; the question is
then to know which gives the better explanation. In our opinion,
the abstract machine approach, by giving a direct syntax for
continuations instead of a functional encoding, make it easier to
follow what is going on. In particular, the justification of the
difference in treatment between negative and positive types seems
particularly clean in such a symmetric setting; it is also easy to see
the relation between changes to the configuration reductions and the
reduction strategy of the proof. This is precisely the term-level
counterpart of the idea that sequent calculus is more convenient than
natural deduction to reason about cut-elimination in presence of
positives. That said, because our meta-language is also a typed
$\lambda$-calculus, some aspects of the CPS-producing transformation from
machines to $\lambda$-terms are also present in our system.

\citet*{ilik} proposes a modified version of intuitionistic
realizability to work with intuitionistic sums, by embedding the CPS
translation inside the definition of realizability. The resulting
notion of realizability is in fact quite close to our classical
realizability presentation, with a ``strong forcing'' relation
(corresponding to our set of \emph{value} witnesses), and
a ``forcing'' relation defined by double-negation ($\sim$
bi-orthogonal) of the strong forcing. In particular, Danko Ilik
remarks that by varying the interplay of these two notions (notably,
requiring a strong forcing hypothesis in the semantics of
function types), one can move from a call-by-name interpretation
to a call-by-value setting, which seems to correspond to our
findings.

\paragraph{Normalization by Evaluation}

There are evidently strong links with normalization by evaluation
(NbE). The previously cited work of \citet*{ilik} constructs, as is
now a folklore method, NbE as the composition of a soundness proof
(embedding of the source calculus into a mathematical statement
parametrized on concrete models) followed by a strong completeness
proof (instantiation of the mathematical statement into the syntactic
model to obtain a normal form). In personal communication, Hugo
Herbelin presented orthogonality and NbE through Kripke models as two
facets of the same construction. Orthogonality focuses on the
(co)-terms realizing the mathematical statement, exhibiting an untyped
reduction sequence. NbE focuses on the typing judgments; its statement
guarantees that the resulting normal form is at the expected type, but
does not exhibit any relation between the input and output term.


\paragraph{System L}

It is not so surprising that the evaluation function exhibited in
Section~\ref{subsec:realization-program} embeds a CPS translation,
given our observation that the definition of truth and falsity value witnesses
determines the evaluation order. Indeed, the latter fact implies that
the evaluation order should remain independent from the evaluation
order of the meta-language (the dependent $\lambda$-calculus used to
implement adequacy), and this property is usually obtained by CPS or
monadic translations.

System L is a direct-style calculus that also has this property of
forcing us to be explicit about the evaluation order -- while being
nicer to work with than results of CPS-encoding. In particular,
\citet*{munch-csl} shows that it is a good calculus to study classical
realizability. Our different variants of truth/falsity witnesses
correspond to targeting different subsets of System L, which also
determine the evaluation order. The principle of giving control of
evaluation order to the term or the co-term according to polarity is
also found in Munch-Maccagnoni's PhD thesis~\citep*{munch-phd}.

The computational content of adequacy for System L has been studied,
on the small $\mu\mutilda$ fragment, by Hugo Herbelin in his
habilitation thesis~\citep*{herbelin-hdr}. The reduction strategy is
fixed to be call-by-name. We note the elegant regularity of the
adequacy statements for classical logic, each of the three versions
(configurations, terms and co-terms) taking an environment of truth
witnesses (for term variables) and an environment of falsity witnesses
(for co-term variables).

Another, more compositional way to understand the results presented in
this paper
consists in performing a \emph{typed} translation from the
$\lambda$-calculus to typed System L, where we have a typing judgment
for terms of the form $\judgty \ca \ea \ta$, a typing judgment for
co-terms of the form $\ob{\ca \mid \ka : \ta \der \kva : \ta}$ and well-typed configurations
$\ob{\ma : (\ca \der \kva : \ta)}$ are the pair of a well-typed term, and a well-typed
co-term, interacting along the same type $\ta$. This translation amounts
to a CPS transform, System L being essentially a direct syntax for writing
programs in CPS.

This type system is also backed by an adequacy lemma formed of three
mutually recursive results of the following form:
\begin{itemize}
\item for any $\ea$ and $\substa$ such as $\judgty \ca \ea \ta$
      and $\substa \in \env{\ca}$,
      we have $\obsubst \ea \substa \in \tru{\ta}$
\item for any $\ka$ and $\substa$ such as $\ob{\ca \mid \ka : \ta \der \kva : \ta}$,
  $\substa \in \env{\ca}$ and $\kb \in \fal{\ta}$, we have $\obsubst \ka {\substa, \by \kb \kva} \in \fal{\ta}$
\item for any $\ma$ and $\substa$ such as $\ob{\ma : (\ca \der \kva : \ta)}$,
   $\substa \in \env{\ca}$ and $\kb \in \fal{\ta}$, we have $\obsubst \ma {\substa, \by \kb \kva} \in \Pole$
\end{itemize}

A careful translation from the $\lambda$-calculus to System L would have
allowed us to piggy-back on the adequacy lemma of System L to obtain the
desired adequacy lemma on $\lambda$-terms. In fact, we saw in
Section~\ref{sec:nbr-all} that our formalization is currently
organized around a choice of translation from the $\lambda$-terms to $\mumutilda$:
the missing piece consists in going from $\mumutilda$ to System L, which
amounts to specifying the polarity of our translation -- or, put
otherwise, deciding on the strictness/laziness of the constructs of
the language. Such a translation goes beyond the intent of the present
paper, which is meant as providing a slow-paced transition from approaches based on
$\lambda$-calculus to ones based on abstract machines in meta-theoretical works.

\paragraph{Realizability in PTS}
\citet*{bernardy-lasson} have some of the most intriguing work on
realizability and parametricity we know of. In many ways they go above
and beyond this work: they capture realizability predicates not as an
ad-hoc membership, but as a rich type in a pure type system (PTS) that
is built on top of the source language of realizers -- itself
a PTS. They also establish a deep connection between realizability and
parametricity -- we hope that parametricity would be amenable to
a treatment resembling ours, but that is purely future work.

One thing we wanted to focus on was the \emph{computational content}
of realizability techniques, and this is not described in their work;
adequacy is seen as a mapping from a well-typed term to a realizer
inhabiting the realizability predicate. But it is a meta-level
operation (not described as a \emph{program}) described solely as an
annotation process. We tried to see more in it -- though of course, it
is a composability property of denotational model constructions that
they respect the input's structure and can be presented as
trivial mappings (e.g.,
$\sembrackets{\ob{\app{t}{u}}}
\eqdef
\mathtt{app}(\sembrackets{\ea},\sembrackets{\eb})$
)
given enough auxiliary functions.

\section{Conclusion}

At which point in a classical realizability proof is the ``real work''
done? We have seen that the computational content of adequacy is
a normalization function, that can be annotated to reconstruct the
reduction sequence. Yet we have shown that there is very little leeway
in proofs of adequacy: their computational content is determined by
the \emph{types} of the adequacy lemma and of truth and falsity
witnesses.
We do not claim to have explored the entire design space, but would be
tempted to conjecture that there is a unique pure (that is, without effect) program modulo $\beta\eta$-equivalence inhabiting the dependent type of $\rea$.

Finally, we have seen that \emph{polarity} plays an
important role in adequacy programs -- even when they finally
correspond to well-known untyped reduction strategies such as
call-by-name or call-by-value. This is yet another argument to study
the design space of type-directed or, more generally,
polarity-directed reduction strategies.

\begin{acks}
  We were fortunate to discuss this work with Guillaume
  Munch-Maccagnoni, Jean-Philippe Bernardy, Hugo Herbelin and Marc
  Lasson, who were patient and pedagogical in their explanations. We
  furthermore thank Adrien Guatto, Guillaume Munch-Maccagnoni,
  {\'E}tienne Miquey, Philip Wadler and anonymous reviewers for their
  feedback on the article.
\end{acks}

\clearpage

\bibliography{rea}


\begin{thebibliography}{45}


\ifx \showCODEN    \undefined \def \showCODEN     #1{\unskip}     \fi
\ifx \showDOI      \undefined \def \showDOI       #1{#1}\fi
\ifx \showISBNx    \undefined \def \showISBNx     #1{\unskip}     \fi
\ifx \showISBNxiii \undefined \def \showISBNxiii  #1{\unskip}     \fi
\ifx \showISSN     \undefined \def \showISSN      #1{\unskip}     \fi
\ifx \showLCCN     \undefined \def \showLCCN      #1{\unskip}     \fi
\ifx \shownote     \undefined \def \shownote      #1{#1}          \fi
\ifx \showarticletitle \undefined \def \showarticletitle #1{#1}   \fi
\ifx \showURL      \undefined \def \showURL       {\relax}        \fi
\providecommand\bibfield[2]{#2}
\providecommand\bibinfo[2]{#2}
\providecommand\natexlab[1]{#1}
\providecommand\showeprint[2][]{arXiv:#2}

\bibitem[\protect\citeauthoryear{Abel}{Abel}{2013}]%
        {abel-hdr}
\bibfield{author}{\bibinfo{person}{Andreas Abel}.}
  \bibinfo{year}{2013}\natexlab{}.
\newblock \emph{\bibinfo{title}{{Normalization by Evaluation: Dependent Types
  and Impredicativity}}}.
\newblock \bibinfo{thesistype}{Ph.D. Dissertation}.
  \bibinfo{school}{Ludwig-Maximilians-Universität München}.
\newblock
\newblock
\shownote{Habilitation thesis.}


\bibitem[\protect\citeauthoryear{Abel, Pientka, Thibodeau, and Setzer}{Abel
  et~al\mbox{.}}{2013}]%
        {abel:copatterns}
\bibfield{author}{\bibinfo{person}{Andreas Abel}, \bibinfo{person}{Brigitte
  Pientka}, \bibinfo{person}{David Thibodeau}, {and} \bibinfo{person}{Anton
  Setzer}.} \bibinfo{year}{2013}\natexlab{}.
\newblock \showarticletitle{Copatterns: programming infinite structures by
  observations}. In \bibinfo{booktitle}{\emph{POPL}}.
\newblock


\bibitem[\protect\citeauthoryear{Andreoli}{Andreoli}{1992}]%
        {andreoli:focusing-92}
\bibfield{author}{\bibinfo{person}{Jean-Marc Andreoli}.}
  \bibinfo{year}{1992}\natexlab{}.
\newblock \showarticletitle{Logic Programming with Focusing Proofs in Linear
  Logic}.
\newblock \bibinfo{journal}{\emph{Journal of Logic and Computation}}
  \bibinfo{volume}{2}, \bibinfo{number}{3} (\bibinfo{year}{1992}),
  \bibinfo{pages}{297--347}.
\newblock
\urldef\tempurl%
\url{https://doi.org/10.1093/logcom/2.3.297}
\showDOI{\tempurl}


\bibitem[\protect\citeauthoryear{Atkey, Ghani, and Johann}{Atkey
  et~al\mbox{.}}{2014}]%
        {atkey:parametric-dependent-types-14}
\bibfield{author}{\bibinfo{person}{Robert Atkey}, \bibinfo{person}{Neil Ghani},
  {and} \bibinfo{person}{Patricia Johann}.} \bibinfo{year}{2014}\natexlab{}.
\newblock \showarticletitle{A relationally parametric model of dependent type
  theory}. In \bibinfo{booktitle}{\emph{POPL}}.
\newblock
\urldef\tempurl%
\url{https://doi.org/10.1145/2535838.2535852}
\showDOI{\tempurl}


\bibitem[\protect\citeauthoryear{Barendregt and Manzonetto}{Barendregt and
  Manzonetto}{2013}]%
        {barendregt-manzonetto-2013}
\bibfield{author}{\bibinfo{person}{Henk Barendregt} {and}
  \bibinfo{person}{Giulio Manzonetto}.} \bibinfo{year}{2013}\natexlab{}.
\newblock \showarticletitle{Turing’s contributions to lambda calculus}.
\newblock In \bibinfo{booktitle}{\emph{Alan Turing -- His Work and Impact}}.
\newblock
\showISBNx{978-0-12-386980-7}


\bibitem[\protect\citeauthoryear{Benton and Hur}{Benton and Hur}{2009}]%
        {benton:biorth-compiler}
\bibfield{author}{\bibinfo{person}{Nick Benton} {and}
  \bibinfo{person}{Chung{-}Kil Hur}.} \bibinfo{year}{2009}\natexlab{}.
\newblock \showarticletitle{Biorthogonality, step-indexing and compiler
  correctness}. In \bibinfo{booktitle}{\emph{ICFP}}.
\newblock


\bibitem[\protect\citeauthoryear{Bernardy and Lasson}{Bernardy and
  Lasson}{2011}]%
        {bernardy-lasson}
\bibfield{author}{\bibinfo{person}{Jean{-}Philippe Bernardy} {and}
  \bibinfo{person}{Marc Lasson}.} \bibinfo{year}{2011}\natexlab{}.
\newblock \showarticletitle{Realizability and Parametricity in Pure Type
  Systems}. In \bibinfo{booktitle}{\emph{{FoSSaCS}}}.
\newblock
\urldef\tempurl%
\url{https://doi.org/10.1007/978-3-642-19805-2_8}
\showDOI{\tempurl}


\bibitem[\protect\citeauthoryear{Brunel}{Brunel}{2014}]%
        {brunel-phd}
\bibfield{author}{\bibinfo{person}{Alo{\"i}s Brunel}.}
  \bibinfo{year}{2014}\natexlab{}.
\newblock \emph{\bibinfo{title}{{The monitoring power of forcing program
  transformations}}}.
\newblock \bibinfo{thesistype}{Ph.D. Dissertation}.
  \bibinfo{school}{{Universit{\'e} Paris 13}}.
\newblock
\urldef\tempurl%
\url{https://hal.archives-ouvertes.fr/tel-01162997}
\showURL{%
\tempurl}


\bibitem[\protect\citeauthoryear{Cook}{Cook}{2009}]%
        {cook:data-abstraction}
\bibfield{author}{\bibinfo{person}{William~R. Cook}.}
  \bibinfo{year}{2009}\natexlab{}.
\newblock \showarticletitle{On understanding data abstraction, revisited}. In
  \bibinfo{booktitle}{\emph{OOPSLA}}.
\newblock


\bibitem[\protect\citeauthoryear{Curien, Fiore, and Munch{-}Maccagnoni}{Curien
  et~al\mbox{.}}{2016}]%
        {curien-fiore-munch-popl16}
\bibfield{author}{\bibinfo{person}{Pierre{-}Louis Curien},
  \bibinfo{person}{Marcelo~P. Fiore}, {and} \bibinfo{person}{Guillaume
  Munch{-}Maccagnoni}.} \bibinfo{year}{2016}\natexlab{}.
\newblock \showarticletitle{A theory of effects and resources: adjunction
  models and polarised calculi}. In \bibinfo{booktitle}{\emph{POPL}}.
\newblock
\urldef\tempurl%
\url{https://doi.org/10.1145/2837614.2837652}
\showDOI{\tempurl}


\bibitem[\protect\citeauthoryear{Curien and Herbelin}{Curien and
  Herbelin}{2000}]%
        {curien-herbelin}
\bibfield{author}{\bibinfo{person}{Pierre{-}Louis Curien} {and}
  \bibinfo{person}{Hugo Herbelin}.} \bibinfo{year}{2000}\natexlab{}.
\newblock \showarticletitle{The duality of computation}. In
  \bibinfo{booktitle}{\emph{ICFP}}.
\newblock
\urldef\tempurl%
\url{https://doi.org/10.1145/357766.351262}
\showDOI{\tempurl}


\bibitem[\protect\citeauthoryear{Danvy}{Danvy}{2006}]%
        {danvy:program-as-data-objects}
\bibfield{author}{\bibinfo{person}{Olivier Danvy}.}
  \bibinfo{year}{2006}\natexlab{}.
\newblock \emph{\bibinfo{title}{An Analytical Approach to Program as Data
  Objects}}.
\newblock \bibinfo{thesistype}{Ph.D. Dissertation}.
\newblock


\bibitem[\protect\citeauthoryear{Downen and Ariola}{Downen and Ariola}{2018}]%
        {downen:tuto-mumutilde}
\bibfield{author}{\bibinfo{person}{Paul Downen} {and} \bibinfo{person}{Zena~M.
  Ariola}.} \bibinfo{year}{2018}\natexlab{}.
\newblock \showarticletitle{A tutorial on computational classical logic and the
  sequent calculus}.
\newblock \bibinfo{journal}{\emph{J. Funct. Program.}} (\bibinfo{year}{2018}).
\newblock


\bibitem[\protect\citeauthoryear{Downen, Johnson{-}Freyd, and Ariola}{Downen
  et~al\mbox{.}}{2015}]%
        {downen-icfp15}
\bibfield{author}{\bibinfo{person}{Paul Downen}, \bibinfo{person}{Philip
  Johnson{-}Freyd}, {and} \bibinfo{person}{Zena~M. Ariola}.}
  \bibinfo{year}{2015}\natexlab{}.
\newblock \showarticletitle{Structures for structural recursion}. In
  \bibinfo{booktitle}{\emph{ICFP}}.
\newblock
\urldef\tempurl%
\url{https://doi.org/10.1145/2858949.2784762}
\showDOI{\tempurl}


\bibitem[\protect\citeauthoryear{Downen, Maurer, Ariola, and {Peyton
  Jones}}{Downen et~al\mbox{.}}{2016}]%
        {downen:sequent-core}
\bibfield{author}{\bibinfo{person}{Paul Downen}, \bibinfo{person}{Luke Maurer},
  \bibinfo{person}{Zena~M. Ariola}, {and} \bibinfo{person}{Simon {Peyton
  Jones}}.} \bibinfo{year}{2016}\natexlab{}.
\newblock \showarticletitle{Sequent calculus as a compiler intermediate
  language}. In \bibinfo{booktitle}{\emph{ICFP}}.
\newblock


\bibitem[\protect\citeauthoryear{Girard}{Girard}{1987}]%
        {girard:linear-logic}
\bibfield{author}{\bibinfo{person}{Jean{-}Yves Girard}.}
  \bibinfo{year}{1987}\natexlab{}.
\newblock \showarticletitle{Linear Logic}.
\newblock \bibinfo{journal}{\emph{Theor. Comput. Sci.}}  \bibinfo{volume}{50}
  (\bibinfo{year}{1987}), \bibinfo{pages}{1--102}.
\newblock
\urldef\tempurl%
\url{https://doi.org/10.1016/0304-3975(87)90045-4}
\showDOI{\tempurl}


\bibitem[\protect\citeauthoryear{Harper}{Harper}{1992}]%
        {harper-semantic-types}
\bibfield{author}{\bibinfo{person}{Robert Harper}.}
  \bibinfo{year}{1992}\natexlab{}.
\newblock \showarticletitle{Constructing Type Systems over an Operational
  Semantics}.
\newblock \bibinfo{journal}{\emph{J. Symb. Comput.}} \bibinfo{volume}{14},
  \bibinfo{number}{1} (\bibinfo{year}{1992}).
\newblock
\urldef\tempurl%
\url{https://doi.org/10.1016/0747-7171(92)90026-Z}
\showDOI{\tempurl}


\bibitem[\protect\citeauthoryear{Herbelin}{Herbelin}{2011}]%
        {herbelin-hdr}
\bibfield{author}{\bibinfo{person}{Hugo Herbelin}.}
  \bibinfo{year}{2011}\natexlab{}.
\newblock \bibinfo{title}{{C'est maintenant qu'on calcule, au c\oe{}ur de la
  dualit{\'e}}}.  (\bibinfo{year}{2011}).
\newblock
\urldef\tempurl%
\url{http://pauillac.inria.fr/~herbelin/habilitation/}
\showURL{%
\tempurl}
\newblock
\shownote{Thèse d'habilitation à diriger les recherches (French).}


\bibitem[\protect\citeauthoryear{Hermida, Reddy, and Robinson}{Hermida
  et~al\mbox{.}}{2013}]%
        {hermida:logical-relation-parametricity-14}
\bibfield{author}{\bibinfo{person}{Claudio Hermida}, \bibinfo{person}{Uday~S.
  Reddy}, {and} \bibinfo{person}{Edmund~P. Robinson}.}
  \bibinfo{year}{2013}\natexlab{}.
\newblock \showarticletitle{Logical Relations and Parametricity – A Reynolds
  Programme for Category Theory and Programming Languages}. In
  \bibinfo{booktitle}{\emph{WACT}}.
\newblock
\urldef\tempurl%
\url{https://doi.org/10.1016/j.entcs.2014.02.008}
\showDOI{\tempurl}


\bibitem[\protect\citeauthoryear{Ilik}{Ilik}{2013}]%
        {ilik}
\bibfield{author}{\bibinfo{person}{Danko Ilik}.}
  \bibinfo{year}{2013}\natexlab{}.
\newblock \showarticletitle{Continuation-passing style models complete for
  intuitionistic logic}.
\newblock \bibinfo{journal}{\emph{Ann. Pure Appl. Logic}}
  \bibinfo{volume}{164}, \bibinfo{number}{6} (\bibinfo{year}{2013}).
\newblock
\urldef\tempurl%
\url{https://doi.org/10.1016/j.apal.2012.05.003}
\showDOI{\tempurl}


\bibitem[\protect\citeauthoryear{Krivine}{Krivine}{1993}]%
        {krivine:lambda-models}
\bibfield{author}{\bibinfo{person}{Jean{-}Louis Krivine}.}
  \bibinfo{year}{1993}\natexlab{}.
\newblock \bibinfo{booktitle}{\emph{Lambda-calculus, types and models}}.
\newblock \bibinfo{publisher}{Masson}.
\newblock
\showISBNx{978-0-13-062407-9}


\bibitem[\protect\citeauthoryear{Krivine}{Krivine}{2014}]%
        {krivine:ZF}
\bibfield{author}{\bibinfo{person}{Jean{-}Louis Krivine}.}
  \bibinfo{year}{2014}\natexlab{}.
\newblock \showarticletitle{On the Structure of Classical Realizability Models
  of {ZF}}. In \bibinfo{booktitle}{\emph{TYPES}}.
\newblock


\bibitem[\protect\citeauthoryear{Krivine}{Krivine}{1985}]%
        {krivine:KAM-85}
\bibfield{author}{\bibinfo{person}{Jean-Louis Krivine}.}
  \bibinfo{year}{1985}\natexlab{}.
\newblock \bibinfo{title}{Un interpre\'eteur du lambda-calcul}.
  (\bibinfo{year}{1985}).
\newblock
\urldef\tempurl%
\url{https://www.irif.fr/~krivine/articles/interprt.pdf}
\showURL{%
\tempurl}


\bibitem[\protect\citeauthoryear{Krivine}{Krivine}{2007}]%
        {krivine:machine}
\bibfield{author}{\bibinfo{person}{Jean-Louis Krivine}.}
  \bibinfo{year}{2007}\natexlab{}.
\newblock \showarticletitle{A call-by-name lambda-calculus machine}.
\newblock \bibinfo{journal}{\emph{Higher-Order and Symbolic Computation}}
  (\bibinfo{year}{2007}).
\newblock
\urldef\tempurl%
\url{https://doi.org/10.1007/s10990-007-9018-9}
\showDOI{\tempurl}


\bibitem[\protect\citeauthoryear{Krivine}{Krivine}{2008}]%
        {krivine:realizability-structure}
\bibfield{author}{\bibinfo{person}{Jean-Louis Krivine}.}
  \bibinfo{year}{2008}\natexlab{}.
\newblock \bibinfo{title}{{Structures de r{\'e}alisabilit{\'e}, RAM et
  ultrafiltre sur N}}.  (\bibinfo{date}{Sept.} \bibinfo{year}{2008}).
\newblock
\urldef\tempurl%
\url{https://hal.archives-ouvertes.fr/hal-00321410}
\showURL{%
\tempurl}
\newblock
\shownote{34 p.}


\bibitem[\protect\citeauthoryear{Krivine}{Krivine}{2009}]%
        {krivine:realizability-classical-logic}
\bibfield{author}{\bibinfo{person}{Jean-Louis Krivine}.}
  \bibinfo{year}{2009}\natexlab{}.
\newblock \showarticletitle{{Realizability in classical logic}}.
\newblock \bibinfo{journal}{\emph{{Panoramas et synth{\`e}ses}}}
  \bibinfo{volume}{27} (\bibinfo{year}{2009}), \bibinfo{pages}{197--229}.
\newblock
\urldef\tempurl%
\url{https://hal.archives-ouvertes.fr/hal-00154500}
\showURL{%
\tempurl}


\bibitem[\protect\citeauthoryear{Lepigre}{Lepigre}{2016}]%
        {lepigre:pml-2016}
\bibfield{author}{\bibinfo{person}{Rodolphe Lepigre}.}
  \bibinfo{year}{2016}\natexlab{}.
\newblock \showarticletitle{A Classical Realizability Model for a Semantical
  Value Restriction}. In \bibinfo{booktitle}{\emph{ESOP}}.
\newblock


\bibitem[\protect\citeauthoryear{Levy}{Levy}{2004}]%
        {levy:cbpv}
\bibfield{author}{\bibinfo{person}{Paul~Blain Levy}.}
  \bibinfo{year}{2004}\natexlab{}.
\newblock \bibinfo{booktitle}{\emph{Call-By-Push-Value: {A}
  Functional/Imperative Synthesis}}. \bibinfo{series}{Semantics Structures in
  Computation}, Vol.~\bibinfo{volume}{2}.
\newblock \bibinfo{publisher}{Springer}.
\newblock
\showISBNx{1-4020-1730-8}


\bibitem[\protect\citeauthoryear{Lindley and Stark}{Lindley and Stark}{2005}]%
        {lindley-stark-2005}
\bibfield{author}{\bibinfo{person}{Sam Lindley} {and} \bibinfo{person}{Ian
  Stark}.} \bibinfo{year}{2005}\natexlab{}.
\newblock \showarticletitle{Reducibility and {$\top\top$}-lifting for
  computation types}. In \bibinfo{booktitle}{\emph{TLCA}}.
\newblock
\urldef\tempurl%
\url{https://doi.org/10.1007/11417170_20}
\showDOI{\tempurl}


\bibitem[\protect\citeauthoryear{Miquel}{Miquel}{2011}]%
        {miquel:classical-rea}
\bibfield{author}{\bibinfo{person}{Alexandre Miquel}.}
  \bibinfo{year}{2011}\natexlab{}.
\newblock \showarticletitle{A Survey of Classical Realizability}. In
  \bibinfo{booktitle}{\emph{TLCA}}.
\newblock
\urldef\tempurl%
\url{https://doi.org/10.1007/978-3-642-21691-6_1}
\showDOI{\tempurl}


\bibitem[\protect\citeauthoryear{Munch-Maccagnoni}{Munch-Maccagnoni}{2009}]%
        {munch-csl}
\bibfield{author}{\bibinfo{person}{Guillaume Munch-Maccagnoni}.}
  \bibinfo{year}{2009}\natexlab{}.
\newblock \showarticletitle{{Focalisation and Classical Realisability (version
  with appendices)}}. In \bibinfo{booktitle}{\emph{CSL}}.
\newblock
\urldef\tempurl%
\url{http://hal.inria.fr/inria-00409793}
\showURL{%
\tempurl}


\bibitem[\protect\citeauthoryear{Munch-Maccagnoni}{Munch-Maccagnoni}{2012}]%
        {munch-gdr-normalization}
\bibfield{author}{\bibinfo{person}{Guillaume Munch-Maccagnoni}.}
  \bibinfo{year}{2012}\natexlab{}.
\newblock \bibinfo{title}{$\lambda$-calcul, machines et orthogonalité}.
  (\bibinfo{year}{2012}).
\newblock
\urldef\tempurl%
\url{http://guillaume.munch.name/files/gdt-logique_24-10-11.pdf}
\showURL{%
\tempurl}
\newblock
\shownote{Lecture notes from the Groupe de Travail de Logique.}


\bibitem[\protect\citeauthoryear{Munch-Maccagnoni}{Munch-Maccagnoni}{2013}]%
        {munch-phd}
\bibfield{author}{\bibinfo{person}{Guillaume Munch-Maccagnoni}.}
  \bibinfo{year}{2013}\natexlab{}.
\newblock \emph{\bibinfo{title}{Syntax and Models of a Non-Associative
  Composition of Programs and Proofs}}.
\newblock \bibinfo{thesistype}{Ph.D. Dissertation}.
  \bibinfo{school}{Université Paris Diderot -- Paris VII}.
\newblock


\bibitem[\protect\citeauthoryear{Munch-Maccagnoni and Scherer}{Munch-Maccagnoni
  and Scherer}{2015}]%
        {munch:polarised-lambda-15}
\bibfield{author}{\bibinfo{person}{G. Munch-Maccagnoni} {and}
  \bibinfo{person}{G. Scherer}.} \bibinfo{year}{2015}\natexlab{}.
\newblock \showarticletitle{Polarised Intermediate Representation of Lambda
  Calculus with Sums}. In \bibinfo{booktitle}{\emph{LICS}}.
\newblock
\urldef\tempurl%
\url{https://doi.org/10.1109/LICS.2015.22}
\showDOI{\tempurl}


\bibitem[\protect\citeauthoryear{Oliva and Streicher}{Oliva and
  Streicher}{2008}]%
        {oliva-streicher}
\bibfield{author}{\bibinfo{person}{Paulo Oliva} {and} \bibinfo{person}{Thomas
  Streicher}.} \bibinfo{year}{2008}\natexlab{}.
\newblock \showarticletitle{On Krivine's Realizability Interpretation of
  Classical Second-Order Arithmetic}.
\newblock \bibinfo{journal}{\emph{Fundam. Inform.}} \bibinfo{volume}{84},
  \bibinfo{number}{2} (\bibinfo{year}{2008}).
\newblock


\bibitem[\protect\citeauthoryear{Pitts}{Pitts}{2000}]%
        {pitts:biorthogonality-00}
\bibfield{author}{\bibinfo{person}{Andrew~M. Pitts}.}
  \bibinfo{year}{2000}\natexlab{}.
\newblock \showarticletitle{Operational Semantics and Program Equivalence}. In
  \bibinfo{booktitle}{\emph{Applied Semantics, International Summer School,
  {APPSEM}}}.
\newblock
\urldef\tempurl%
\url{https://doi.org/10.1007/3-540-45699-6_8}
\showDOI{\tempurl}


\bibitem[\protect\citeauthoryear{Pitts and Stark}{Pitts and Stark}{1998}]%
        {pitts-stark-1998}
\bibfield{author}{\bibinfo{person}{Andrew~M Pitts} {and}
  \bibinfo{person}{Ian~DB Stark}.} \bibinfo{year}{1998}\natexlab{}.
\newblock \showarticletitle{Operational reasoning for functions with local
  state}.
\newblock \bibinfo{journal}{\emph{Higher order operational techniques in
  semantics}} (\bibinfo{year}{1998}).
\newblock


\bibitem[\protect\citeauthoryear{Rieg}{Rieg}{2014}]%
        {rieg:PhD}
\bibfield{author}{\bibinfo{person}{Lionel Rieg}.}
  \bibinfo{year}{2014}\natexlab{}.
\newblock \emph{\bibinfo{title}{On Forcing and Classical Realizability.
  (Forcing et r{\'{e}}alisabilit{\'{e}} classique)}}.
\newblock \bibinfo{thesistype}{Ph.D. Dissertation}.
  \bibinfo{school}{{\'{E}}cole normale sup{\'{e}}rieure de Lyon, France}.
\newblock
\urldef\tempurl%
\url{https://tel.archives-ouvertes.fr/tel-01061442}
\showURL{%
\tempurl}


\bibitem[\protect\citeauthoryear{Scherer}{Scherer}{2016}]%
        {scherer:PhD}
\bibfield{author}{\bibinfo{person}{Gabriel Scherer}.}
  \bibinfo{year}{2016}\natexlab{}.
\newblock \emph{\bibinfo{title}{Which types have a unique inhabitant? Focusing
  on pure program equivalence}}.
\newblock \bibinfo{thesistype}{Ph.D. Dissertation}. \bibinfo{school}{Paris
  Diderot University, France}.
\newblock
\urldef\tempurl%
\url{https://tel.archives-ouvertes.fr/tel-01309712}
\showURL{%
\tempurl}


\bibitem[\protect\citeauthoryear{Tait}{Tait}{1967}]%
        {tait1967}
\bibfield{author}{\bibinfo{person}{William~W Tait}.}
  \bibinfo{year}{1967}\natexlab{}.
\newblock \showarticletitle{Intensional interpretations of functionals of
  finite type I}.
\newblock \bibinfo{journal}{\emph{The journal of symbolic logic}}
  \bibinfo{volume}{32}, \bibinfo{number}{2} (\bibinfo{year}{1967}).
\newblock
\urldef\tempurl%
\url{https://doi.org/10.2307/2271658}
\showDOI{\tempurl}


\bibitem[\protect\citeauthoryear{{The Coq Development Team}}{{The Coq
  Development Team}}{2018}]%
        {coq}
\bibfield{author}{\bibinfo{person}{{The Coq Development Team}}.}
  \bibinfo{year}{2018}\natexlab{}.
\newblock \bibinfo{booktitle}{\emph{The {Coq} proof assistant reference
  manual}}.
\newblock
\urldef\tempurl%
\url{http://coq.inria.fr}
\showURL{%
\tempurl}
\newblock
\shownote{Version 8.7.}


\bibitem[\protect\citeauthoryear{Timany, Stefanesco, Krogh{-}Jespersen, and
  Birkedal}{Timany et~al\mbox{.}}{2018}]%
        {timany-2018}
\bibfield{author}{\bibinfo{person}{Amin Timany}, \bibinfo{person}{L{\'{e}}o
  Stefanesco}, \bibinfo{person}{Morten Krogh{-}Jespersen}, {and}
  \bibinfo{person}{Lars Birkedal}.} \bibinfo{year}{2018}\natexlab{}.
\newblock \showarticletitle{A logical relation for monadic encapsulation of
  state: proving contextual equivalences in the presence of runST}. In
  \bibinfo{booktitle}{\emph{POPL}}.
\newblock
\urldef\tempurl%
\url{https://doi.org/10.1145/3158152}
\showDOI{\tempurl}


\bibitem[\protect\citeauthoryear{Turon, Thamsborg, Ahmed, Birkedal, and
  Dreyer}{Turon et~al\mbox{.}}{2013}]%
        {turon-thamsborg-ahmed-birkedal-dreyer}
\bibfield{author}{\bibinfo{person}{Aaron~Joseph Turon}, \bibinfo{person}{Jacob
  Thamsborg}, \bibinfo{person}{Amal Ahmed}, \bibinfo{person}{Lars Birkedal},
  {and} \bibinfo{person}{Derek Dreyer}.} \bibinfo{year}{2013}\natexlab{}.
\newblock \showarticletitle{Logical relations for fine-grained concurrency}. In
  \bibinfo{booktitle}{\emph{POPL}}.
\newblock
\urldef\tempurl%
\url{https://doi.org/10.1145/2480359.2429111}
\showDOI{\tempurl}


\bibitem[\protect\citeauthoryear{Zeilberger}{Zeilberger}{2009}]%
        {zeilberger-phd}
\bibfield{author}{\bibinfo{person}{Noam Zeilberger}.}
  \bibinfo{year}{2009}\natexlab{}.
\newblock \emph{\bibinfo{title}{The Logical Basis of Evaluation Order and
  Pattern-Matching}}.
\newblock \bibinfo{thesistype}{Ph.D. Dissertation}. \bibinfo{school}{Carnegie
  Mellon University}.
\newblock


\bibitem[\protect\citeauthoryear{Zeilberger}{Zeilberger}{2013}]%
        {zeilberger:harmony-focusing-2013}
\bibfield{author}{\bibinfo{person}{Noam Zeilberger}.}
  \bibinfo{year}{2013}\natexlab{}.
\newblock \bibinfo{title}{Polarity in Proof Theory and Programming}.
\newblock   (\bibinfo{date}{August} \bibinfo{year}{2013}).
\newblock
\urldef\tempurl%
\url{http://noamz.org/talks/logpolpro.pdf}
\showURL{%
\tempurl}
\newblock
\shownote{Lecture Notes for the Summer School on Linear Logic and Geometry of
  Interaction in Torino, Italy.}


\end{thebibliography}

\end{document}
